\newtheorem{Lemma}{Lemma}
\newtheorem{prop}{Proposition}
\newtheorem{criterion}{Criterion}
\theoremstyle{definition}
\newtheorem{definition}{Definition}
\newtheorem{example}{Example}
\renewcommand{\k}{{\kappa}}
\renewcommand{\t}{{\tau}}
\newcommand{\R}{\mathbb{R}}
\newcommand{\mmPT}{\mathcal{P}_T}
\DeclareMathOperator{\im}{Im}
\begin{document}

\title{Symbolic proof of bistability in reaction networks}
\author[A Torres, E Feliu]{Ang\'elica Torres$^1$, Elisenda Feliu$^{1,2}$}
\date{\today}

\footnotetext[1]{Department of Mathematical Sciences, University of Copenhagen, Universitetsparken 5, 2100 Copenhagen, Denmark.}
\footnotetext[2]{\emph{Corresponding author: } efeliu@math.ku.dk}

\begin{abstract}
Deciding whether and where a system of parametrized ordinary differential equations displays bistability, that is, has at least two asymptotically stable steady states for some choice of parameters, is a hard problem. For systems modeling biochemical reaction networks, we introduce a procedure to determine, exclusively via symbolic computations, the stability of the steady states for unspecified parameter values. In particular, our approach fully determines the stability type of all steady states of a broad class of networks.  To this end, we combine the Hurwitz criterion, reduction of the steady state equations to one univariate equation, and structural reductions of the reaction network. 
Using our method, we prove that bistability occurs in open regions in parameter space for many relevant motifs in cell signaling. 
\end{abstract}

\maketitle

\section{Introduction}
Bistability, that is, the existence of at least two stable steady states in a dynamical system, has been linked to switch-like behavior in biological networks and cellular decision making and it has been observed experimentally in a variety of systems \cite{Guidi:Bistability,Ninfape20,Ozbudak:BistabilityEcoli}. However, proving the existence of bistability in a parameter-dependent mathematical model is in general hard.

We focus on (bio)chemical reaction networks with associated kinetics, giving rise to systems of Ordinary Differential Equations (ODEs) that model the change in the concentration of the species of the network over time. These systems come equipped with unknown parameters and, ideally, one wishes to determine properties of the family of ODEs for varying parameter values. Here we are concerned with stability of the steady states, and focus on  the following three questions: (1) if the network admits only one steady state for all parameter choices, is it asymptotically stable?  (2) can parameter values be chosen such that the system is bistable? (3) does it hold that for any choice of parameters yielding at least three steady states, two of them are asymptotically stable?

As the parameters are regarded unknown, explicit expressions for the steady states are rarely available.  
Problem (1) has been shown to be tractable for certain classes of networks. For example,  the only positive steady state of complex balanced networks admits a Lyapunov function making it asymptotically stable \cite{Feinberg1972,hornjackson}. The use of Lyapunov functions and the theory of monotone systems has been employed more broadly to other classes of networks \cite{banaji-donnell-II,Ali-Angeli,angelisontag2}. Finally, algebraic criteria as the Hurwitz criterion or the study of P-matrices also provide asymptotic stability of steady states, often in combination with algebraic parametrizations of the steady state variety. In \cite{Clarke:graph} the Hurwitz criterion is analyzed using graphical methods.

Problem (2) is much harder and typically tackled by first deciding whether the network admits multiple steady states using one of the many available methods  \cite{joshi-shiu-III,crnttoolbox,MullerSigns,conradi-switch}, and then numerically computing the steady states and their stability for a suitable choice of parameter values.  Rigorous proofs of bistability require advanced analytical arguments such as bifurcation theory and geometric singular perturbation theory, as employed  in \cite{rendall-2site,rendall-feliu-wiuf} for futile cycles. The use of the Hurwitz criterion to prove bistability is anecdotal, as rarely explicit descriptions of the steady states can be found. Problem (3) has been addressed for small systems using case-by-case approaches, but no systematic strategy has been proposed.

We devise a flow chart to solve the problems (1)-(3) using computer-based proofs relying only on symbolic operations. This is achieved by combining three key ingredients. First, we apply the Hurwitz criterion \cite{Barnett:hurwitz-routh,Anagnost} on the characteristic polynomial of the Jacobian of the ODE system evaluated at a parametrization of the steady states. 
Second, we observe that when all but the last Hurwitz determinants are positive (meaning that instabilities only arise via an eigenvalue with positive real part), and further the solutions to the steady state equations are in one to one correspondence with the zeros of a univariate function, then the stability of the steady states is completely determined and question (3) can be answered  (Theorem~\ref{thm:bistability}). Third, if Theorem~\ref{thm:bistability} does not apply, whenever possible, we reduce the network to a smaller one for which Theorem~\ref{thm:bistability} applies. Afterwards, we use the results on the reduced network to infer stability properties of the steady states of the original network in an open parameter region. To this end, many reduction techniques have been proposed \cite{banaji-pantea-inheritance,joshi-shiu-II,inflows,craciun-feinberg,feliu:intermediates}, but often removal of reactions \cite{joshi-shiu-II} and of intermediates \cite{feliu:intermediates} suffice for (bio)chemical networks. 

Even though our approach demands  heavy symbolic computations, we illustrate how problem (3) can successfully be tackled  for small networks, and further, we prove the existence of bistability in open regions of the parameter space  for several relevant cell signaling motifs. 
In particular, for many networks  we provide the first proof of bistability  in an open region of the parameter space, with the exception of the double phosphorylation cycle, whose bistability was proven in \cite{rendall-2site}. 

This article is structured as follows: In Section~\ref{sec:framework}, we provide the mathematical framework  and state the two main results of this work. In Section~\ref{sec:procedure}, we present the procedure for detecting bistability and apply it to numerous examples arising from cell signalling. In Section~\ref{app:proofs} we prove the main results, and in Section~\ref{sec:comp_challenges} we address the computational challenges of our method. Finally, we conclude with a discussion of our contribution. We include an appendix with an  expanded explanation of the examples in Section~\ref{sec:procedure}, together with additional stability criteria that can be considered instead of the Hurwitz criterion. Computations are provided in an accompanying supplementary \texttt{Maple} file (available from the second author's webpage).

\section{Mathematical framework} \label{sec:framework}
We use the following notation: 
 $J_f$ is the Jacobian matrix of a function $f$. We denote by $V^\perp$  the orthogonal complement of a vector space $V$. For $A\in \R^{n\times n}$ and $I,J\subseteq \{1,\ldots ,n\}$,  we let $A_{I,J}$ be the submatrix of $A$ with rows (resp. columns) with indices in $I$ (resp. $J$). Finally, we denote by $p_A(\lambda)$  the characteristic polynomial of a matrix $A$.

\subsection{Reaction networks and steady states} We consider reaction networks over a set of \emph{species} $\mathcal{S}=\{X_1,\ldots ,X_n\}$  given by a collection of reactions
\begin{equation}\label{eq:reactions}
r_j \colon \sum\nolimits_{i=1}^n \alpha_{ij} X_i \longrightarrow \sum\nolimits_{i=1}^n \beta_{ij}X_i  \quad\mbox{ for } j=1,\ldots ,m
\end{equation}
	with $\alpha_{ij}\neq \beta_{ij}$ for at least one index $i$. 
Let $x_i$ denote the concentration of $X_i$. Given a differentiable  \emph{kinetics}   $v\colon \R^n_{\geq 0}\rightarrow \R^m_{\geq 0}$, the dynamics of the concentrations of the species in the network over time $t$ are modeled by means of a system of autonomous ODEs, 
\begin{equation}\label{eq:ode}
\tfrac{dx}{dt}=Nv(x), \qquad x=(x_1,\ldots ,x_n) \in \R^n_{\geq 0},
\end{equation}
where $N\in \mathbb{R}^{n\times m}$ is the \emph{stoichiometric matrix} with $j^{\rm th}$ column
 $(\beta_{1j}-\alpha_{1j},\ldots ,\beta_{nj}-\alpha_{nj})$. That is, the $j$-th column of $N$ encodes the net production of each species in the $j$-th reaction.  
We write  
\[ f(x)\coloneqq Nv(x).\]
With \textit{mass-action} kinetics,  we have $v_j(x)=\k_j x_1^{\alpha_{1j}}\cdots x_n^{\alpha_{nj}}$
where $\k_j>0$ is a \textit{rate constant}, shown often as a label of the reaction.
		
Under mild conditions, satisfied by common kinetics including mass-action, $\R^n_{\geq 0}$ and $\R^n_{>0}$ are forward invariant by Eq.~\eqref{eq:ode}, \cite{volpert}. Furthermore, any trajectory of Eq.~\eqref{eq:ode} 
is confined to a so-called \emph{stoichiometric compatibility class} $(x_0 + S)\cap \R^n_{\geq 0}$ with $x_0\in \R^n_{\geq 0}$,  where $S$ is the column span of $N$ and is called the \emph{stoichiometric subspace} \cite{feinberg-book}. 
We let $s=\dim(S)$ and $d=n-s$. The set $(x_0 + S)\cap \R^n_{\geq 0}$ is the solution set of the equations $Wx=T$ with 
$W\in \R^{d\times n}$ any matrix whose rows form a basis of $S^\perp$ and $T=Wx_0\in \R^d$. 
 These equations are called \emph{conservation laws} and the defined stoichiometric compatibility class is denoted by $\mathcal{P}_T$.  The vector $T$ is called a vector of \emph{total amounts}.

\smallskip
The steady states (or equilibria) of the network are the non-negative solutions to the system $Nv(x)=0$.  The positive steady states, that is, the solutions in $\R^n_{>0}$, define the  \emph{positive steady state variety} $V^+$.
The existence of $d$ linearly independent conservation laws implies that $d$ steady state equations are redundant. 
Let $W\in \R^{d\times n}$ be row reduced with $i_1,\ldots,i_d$ the indices of the first non-zero coordinate of each row. For $T\in \mathbb{R}^d$, we define the following function 
	\begin{equation}\label{F}
	F_T(x)_i=\begin{cases} 	f_i(x) & i\notin \{i_1,\ldots,i_d\}\\
		(Wx-T)_i & i\in \{i_1,\ldots,i_d\},
\end{cases}
	\end{equation}	
which arises after replacing redundant equations in the system $Nv(x)=0$ with $Wx-T=0$. Hence, the solutions to $F_T(x)=0$ are the steady states in the stoichiometric compatibility class $\mmPT$ \cite{FeliuPlos,wiuf-feliu}. 
A steady state $x^*$  is said to be \emph{non-degenerate} if $\ker(J_f(x^*))\cap S=\{0\}$, or equivalently,   if $\det(J_{F_T}(x^*))\neq 0$ \cite{wiuf-feliu}. Observe that $J_{F_T}(x^*)$ is independent of $T$.

\begin{example}\label{ex:HK}
Consider the following reaction network
		\begin{equation}\label{R.Ex}
			\ce{X1 ->[\k_1] X2 } \qquad \ce{X2 + X3 ->[\k_2] X1 + X4} \qquad \ce{X4 ->[\k_3] X3}.
		\end{equation}
		This is a simplified model of a two-component system, consisting of a histidine kinase HK and a response regulator RR \cite{FeliuPlos}. Both occur unphosphorylated ($X_1,X_3$) and phosphorylated ($X_2,X_4$). 
	With mass-action kinetics, the associated ODE system is
		\begin{align}
		\label{eq:x1x3} 	\tfrac{dx_1}{dt} &= -\k_1x_1+\k_2x_2x_3  &			\tfrac{dx_3}{dt} &= -\k_2x_2x_3 + \k_3x_4 \\
\label{eq:x2x4}			\tfrac{dx_2}{dt} &= \k_1x_1 -\k_2x_2x_3  & 			\tfrac{dx_4}{dt} &= \k_2x_2x_3 - \k_3x_4,
		\end{align} 
and we consider the stoichiometric compatibility class $\mmPT$ defined by $ x_1+x_2=T_1$ and $x_3+x_4=T_2$.
With this choice,  $F_T(x)$ is
\[\big( x_1+x_2-T_1,  \k_1x_1 -\k_2x_2x_3,  x_3+x_4-T_2,  \k_2x_2x_3 - \k_3x_4\big)^{tr},\]
where \emph{tr} indicates the transpose. 

In this example, the positive steady states are the positive solutions to the equations $ \k_1x_1 -\k_2x_2x_3=0,$ $ \k_2x_2x_3 - \k_3x_4=0$, 
obtained by setting the right-hand side of $\tfrac{dx_2}{dt},\tfrac{dx_4}{dt}$ in Eq.~\eqref{eq:x2x4}	 to zero.  
Solving these equations for $x_1$ and $x_3$ leads to the following parametrization of $V^+$:
		\begin{equation}\label{ss1}
		\phi(x_2,x_4)= \left(\tfrac{\k_3 x_4}{\k_1}, x_2 ,\tfrac{\k_3x_4}{\k_2x_2},x_4  \right),\qquad (x_2,x_4)\in \R^2_{>0}.
		\end{equation}
\end{example}	

In general, we refer to a \emph{positive parametrization} as any surjective map of the form
	\begin{equation}\label{eq:parametrization}
	\begin{split} 
 		\phi \colon \R^d_{>0} & \longrightarrow  V^+ \\
  		\xi & \longmapsto \phi(\xi). 
  	\end{split}
	\end{equation}
In practice, under mass-action kinetics, the entries of $\phi$ are  often rational functions in $\xi$. 
Positive parametrizations play a key role in what follows, and strategies to find one have been briefly reviewed in \cite{FeliuPlos}. Note that  there is no guarantee that a positive parametrization exists, or that can be found with known approaches. Nevertheless, networks arising from cell signaling with mass-action kinetics display features that facilitate the existence of parametrizations. The main such property is that monomials in $v(x)$ are often of at most degree two. This allows to perform linear elimination and write some variables in terms of positive functions of the others at steady state. If enough variables can be eliminated, then a positive parametrization can be found. This approach has been extensively studied, e.g. \cite{fwptm,Fel_elim,saez_elim,gunawardena-linear,TG-rational}.  The second scenario that favors finding parametrizations is when the positive steady state variety is described by binomial equations. In this case, $V^+$ is parametrized by monomials. Complex-balancing belongs to this scenario \cite{hornjackson,Feinberg1972,Craciun-Sturmfels,Dickenstein:2011p1112,muller}, as well as so-called networks with toric steady states \cite{PerezMillan}. Simple ways to find parametrizations for certain classes of networks (MESSI networks) have been also identified \cite{Dickenstein-MESSI}.   Finally, it is worth mentioning that often, brute force  with symbolic software (such as \texttt{Maple}) works: try to  solve symbolically the steady state equations for subsets of $s$ variables. If an output is produced, then decide whether/when the solution is positive.

\subsection{Multistationarity, bistability and network reduction}\label{sec:red}
We say that a network is \emph{multistationary} if it has at least two positive steady states in some $\mmPT$, that is, $F_T(x)=0$ has at least two positive solutions for some $T\in \R^d$. Similarly,  a \emph{monostationary} network has exactly one positive steady state in each $\mmPT$. 	

Under some conditions, if the sign of $\det(J_{F_T}(x^*))$ is $(-1)^s$ for all positive steady states  $x^*$, then the network is known to be monostationary; if the sign is $(-1)^{s+1}$ for some $x^*$, then it is multistationary \cite{FeliuPlos}. 
We a positive parametrization of $V^+$ is available, this result can be used to obtain inequalities in the rate constants that guarantee  or preclude multistationarity \cite{conradi-mincheva,FeliuPlos}.

Given an ODE system $\tfrac{dx}{dt}=f(x)$, a steady state $x^*$ is said to be \textit{stable} if for each  $\epsilon>0$, there exists $\delta>0$ such that solutions starting within distance $\delta$ of  $x^*$,  remain within distance $\epsilon$ of $x^*$.  If additionally  $\delta$ can be chosen such that solutions tend to $x^*$ as time increases, then $x^*$ is  \textit{asymptotically stable}. If $x^*$ is not stable, then it is  \textit{unstable}.

Stability of steady states can be often addressed by inspecting the eigenvalues of $J_f(x^*)$: if all eigenvalues of $J_f(x^*)$ have negative real part, then $x^*$ is said to be \emph{exponentially stable}, and  exponential stability implies asymptotic stability (\S2.7-2.8 in \cite{perko}). If at least one eigenvalue has positive real part, then $x^*$ is unstable. For further discussions on stability we refer to \cite{perko}. 

The stability   of a steady state is studied relatively to $\mmPT$. A network that admits at least two asymptotically stable positive steady states in some $\mmPT$ is called \emph{bistable}. 
Detecting multistationarity and bistability is challenging already for medium sized networks. To overcome computational difficulties one may employ several structural modifications of the network, among which we focus on removal of reactions (subnetworks) and removal of intermediates. 

Specifically, given two networks $\mathcal{G}$ and $\mathcal{G}'$, $\mathcal{G}'$ is a \textit{subnetwork} of $\mathcal{G}$ if it arises after removing reactions of $\mathcal{G}$  \cite{joshi-shiu-II}. Assuming mass-action kinetics and assuming that $\mathcal{G}$ and $\mathcal{G'}$ have the same stoichiometric subspace, if $\mathcal{G}'$ has $\ell_1$ exponentially stable and $\ell_2$ non-degenerate unstable steady states in some stoichiometric compatibility class  for some  rate constants $\k$, then $\mathcal{G}$ has at least $\ell_1$ exponentially stable and $\ell_2$ non-degenerate unstable steady states in some stoichiometric compatibility class  for some rate constants $\widetilde{\k}$. Here, $\widetilde{\k}$ agrees with $\k$ for the common reactions and is small enough for the reactions that  only are in $\mathcal{G}$.

An \emph{intermediate} $Y$ is any species that is both a product and a reactant in the network and that does not interact with any other species, that is, the only complex containing it is $Y$ itself. Removal of intermediates leads to a new network obtained after collapsing into one reaction all paths of reactions from and to non-intermediates and through intermediates \cite{feliu:intermediates}. For example, the species \ce{S0E} in 
\[  \ce{S_0 + E <=>S0E -> S1 + E} \]
 is an intermediate. Its removal yields the reaction \ce{S_0 + E -> S_1 + E}. 
 This is one of   the simplest forms of intermediates, but these can appear in complicated mechanisms linking several non-intermediate complexes.
 
 With mass-action kinetics, if $\mathcal{G}'$ is obtained from $\mathcal{G}$ through the removal of intermediate species, and satisfies a technical condition on the rate constants \cite{feliu:intermediates,amir_multi}, the following holds: If $\mathcal{G}'$ has $\ell_1$ exponentially stable and $\ell_2$ non-degenerate unstable steady states in some  stoichiometric compatibility class  for some  rate constants $\k$, then  $\mathcal{G}$ has at least $\ell_1$ exponentially stable and $\ell_2$ non-degenerate unstable steady states in some stoichiometric compatibility class for some   rate constants $\widetilde{\k}$. In \cite{feliu:intermediates,amir_multi}, an explicit description of the rate constants $\widetilde{\k}$ in terms of $\k$ can be found. 
The technical condition on the rate constants is satisfied by intermediates $Y$  appearing only in subnetworks of the form $y\ce{->} Y \ce{->} y'$ or $y\ce{<=>} Y \ce{->} y'$, where $y,y'$ are arbitrary complexes, see \cite[Prop. 5.3(i)]{amir_multi}. This is the main type of intermediates arising in cell signaling, and in particular in this work, and hence bistability and multistationarity can be lifted.

There are other network reduction techniques that yield to analogous results, see e.g. \cite{banaji-pantea-inheritance,inflows}. In what follows, we will only focus on removal of reactions and intermediates, as these modification typically suffice in networks arising from cell signaling.

\subsection{The Jacobian matrix of reaction networks}
In the context of reaction networks, we determine the stability of steady states based on the eigenvalues of the Jacobian of the restriction of the system in Eq.~\eqref{eq:ode} to a stoichiometric compatibility class. To this end, we consider the projection of  $J_f(x)$ onto the stoichiometric subspace $S$ by writing the ODE system in local coordinates of $S$. 
Let $R_0\in \mathbb{R}^{n\times s}$ be a matrix whose columns form a basis of $S$ and $L\in \R^{s\times m}$ such that $N=R_0 L$. Then the projection of  $J_f(x)$ onto   $S$ is  $L J_v(x) R_0$. 

\begin{prop}\label{prop:basic_jacobian}
Let $R_0\in \mathbb{R}^{n\times s}$ be a matrix whose columns form a basis of $S$, and $L\in \R^{s\times m}$ be such that $N=R_0 L$. For any $x\in \R^n$, let $Q_x:=L J_v(x) R_0$  and  denote the characteristic polynomial $p_{Q_{x}}(\lambda)$ by $q_{x}(\lambda)$.  
Then  the characteristic polynomials 
$p_{J_f(x)}$ and $q_{x}$ satisfy 
\[ p_{J_f(x)}(\lambda)=\lambda^{n-s}q_{x}(\lambda). \]
Further, the independent term of $q_{x}(\lambda)$ is  $(-1)^s\det(J_{F_T}(x))$, with $F_T$ as in Eq.~\eqref{F} for any  choice of $W$. 
\end{prop}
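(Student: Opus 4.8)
The plan is to treat the two assertions separately, deriving both from elementary determinant identities applied to the factorization $J_f(x) = N J_v(x) = R_0\,(L J_v(x))$.

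For the first identity, I would set $B := R_0 \in \R^{n\times s}$ and $C := L J_v(x)\in \R^{s\times n}$, so that $J_f(x) = BC$ while $Q_x = L J_v(x) R_0 = CB$. The claim $p_{J_f(x)}(\lambda) = \lambda^{n-s} q_x(\lambda)$ is then exactly the Weinstein--Aronszajn (Sylvester) determinant identity $\det(\lambda I_n - BC) = \lambda^{n-s}\det(\lambda I_s - CB)$. I would prove it by evaluating the determinant of the $(n+s)\times(n+s)$ block matrix $\left(\begin{smallmatrix}\lambda I_n & B\\ C & I_s\end{smallmatrix}\right)$ via Schur complements in two ways: eliminating the $I_s$ block gives $\det(\lambda I_n - BC)$, while eliminating the $\lambda I_n$ block (for $\lambda\neq 0$) gives $\lambda^{n}\det(I_s - \tfrac1\lambda CB) = \lambda^{n-s}\det(\lambda I_s - CB)$; equating and invoking polynomial continuity extends the equality to all $\lambda$. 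This part is routine.

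For the second assertion, since $q_x(\lambda)=\det(\lambda I_s - Q_x)$ has degree $s$, its constant term is $q_x(0) = \det(-Q_x) = (-1)^s\det(Q_x)$; thus it suffices to show $\det(Q_x) = \det(J_{F_T}(x))$. After relabeling species I may assume the pivot indices are $P=\{1,\dots,d\}$ with complement $E$, so that the reduced form $W = [\,I_d \mid W_E\,]$ holds with $W_E\in\R^{d\times s}$, and $J_{F_T}(x)$ is obtained from $J_f(x)$ by replacing its first $d$ rows with the rows of $W$. Writing $G := L J_v(x)$ and splitting columns along $P$ and $E$, a Schur-complement expansion along the top-left identity block gives $\det(J_{F_T}) = \det\!\big((J_f)_{E,E} - (J_f)_{E,P} W_E\big)$; substituting $J_f = R_0 G$ factors out $\det((R_0)_E)$ and leaves $\det(J_{F_T}) = \det((R_0)_E)\,\det(G_E - G_P W_E)$.

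It then remains to match this with $\det(Q_x) = \det(G R_0)$. The key structural input is $W R_0 = 0$ (columns of $R_0$ lie in $S$, rows of $W$ in $S^\perp$), which in block form reads $(R_0)_P = -W_E (R_0)_E$; substituting into $G R_0 = G_P (R_0)_P + G_E (R_0)_E$ yields $Q_x = (G_E - G_P W_E)(R_0)_E$, whose determinant is precisely $\det((R_0)_E)\,\det(G_E - G_P W_E) = \det(J_{F_T})$. I expect the main obstacle to be the careful bookkeeping in this last step: correctly tracking the pivot/non-pivot partition, verifying that the reduced-row-echelon normalization $W_{\bullet,P} = I_d$ forces the otherwise-present factor $\det(W_{\bullet,P})$ to equal $1$ (which is exactly what makes the statement independent of the choice of $W$), and checking that the row permutation grouping the $W$-rows contributes a trivial sign.
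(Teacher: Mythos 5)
Your proof is correct, but it takes a genuinely different route from the paper's on both halves. For the factorization $p_{J_f(x)}(\lambda)=\lambda^{n-s}q_x(\lambda)$, the paper extends $R_0$ to an invertible $R=(R_0\mid R_0')$, uses the geometric fact $\im(J_f(x))\subseteq S$ to show that $R^{-1}J_f(x)R$ is block triangular with nonzero block $R_1J_f(x)R_0=LJ_v(x)R_0$, and reads off the characteristic polynomial; you instead apply the Weinstein--Aronszajn identity to the factorization $J_f(x)=R_0\cdot(LJ_v(x))$, which buys you a shorter argument that never requires choosing a complement of $S$ and works verbatim for all $x$ (the relevant rank drop is built into the identity rather than derived from $\im(J_f)\subseteq S$). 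For the constant term, the difference is starker: the paper does not prove this part at all but cites Proposition 5.3 of Wiuf--Feliu, whereas you give a complete self-contained derivation of $\det(Q_x)=\det(J_{F_T}(x))$ via the two Schur-complement computations and the key relation $WR_0=0$, i.e.\ $(R_0)_P=-W_E(R_0)_E$, which exhibits both determinants as $\det\bigl((R_0)_E\bigr)\det\bigl(G_E-G_PW_E\bigr)$. Your bookkeeping concerns are the right ones but all resolve favorably: the relabeling is a simultaneous row--column permutation (a conjugation) of $J_{F_T}$ and leaves $Q_x$ literally unchanged, and the normalization $W_{\bullet,P}=I_d$ (or even just unit upper triangular) kills the factor $\det(W_{\bullet,P})$. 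Note also that invertibility of $(R_0)_E$ is never needed, since the final step only uses multiplicativity of the determinant.
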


The proof of  Proposition~\ref{prop:basic_jacobian} can be found in Section~\ref{app:proofs} as part of Proposition~\ref{prop:basic_jacobian_II}.
According to Proposition~\ref{prop:basic_jacobian},  the $s$ eigenvalues of the matrix $Q_x$ are the eigenvalues of $J_f(x)$ once zero counted with multiplicity $d$ is disregarded. 
In order to study the (sign of the real part of the) spectrum of the matrices $Q_x$ when 
$x$ is a positive steady state, we use a positive parametrization.  
\begin{example}\label{ex:HK2}
For the network in  Eq.~\eqref{R.Ex}, we consider the following matrices
\[ R_0= \left[ \begin{array}{rr}
		-1 &0 \\[-2pt]
		1 & 0 \\[-2pt]
		0 & 1 \\[-2pt]
		0 & -1
	\end{array} \right], \quad  L= \left[ \begin{array}{ccc}
		1 & -1 & 0\\[-2pt]
		0 & -1 & 1
	\end{array} \right],\quad   J_v(x) = 
 			 \left[\begin{array}{cccc}
 			\k_1 & 0 & 0 & 0 \\[-2pt]
 			0 & \k_2x_3 & \k_2 x_2 & 0 \\[-2pt]
 			0 & 0 & 0 & \k_3 
 			\end{array}\right],
 	\]
and we are interested in the eigenvalues of the matrix
\begin{equation}\label{Qrest}
		 Q_{x}=LJ_v(x)R_0=\left[ \begin{array}{cc}
		-\k_1-\k_2x_3 & -\k_2x_2 \\[-2pt]
		-k_2 x_3 & -\k_2x_2-\k_3
	\end{array} \right]
	\end{equation}	
evaluated at a steady state  $x^*=\phi(x_2,x_4)$, c.f. Eq.~\eqref{ss1}.
Thus, by analyzing the eigenvalues of $Q_{\phi(x_2,x_4)}$  for \emph{all} values of   $\k$ and $x_2,x_4>0$, we study the stability of all positive steady states. In particular,  the characteristic polynomial $q_{x}$ of $Q_{x}$ in Eq.~\eqref{Qrest}  is 
\begin{equation*} 
q_{x}(\lambda) = \lambda^2+ (\k_2x_2+\k_2x_3+\k_1+\k_3)\lambda + \k_1\k_2x_2+\k_2\k_3x_3+\k_1\k_3.
\end{equation*}
\end{example}

\smallskip
We conclude this part with a key technical result (proven in Section~\ref{app:proofs}) on the determinant of $J_f(x^*)$ in the particular case where  system $F_T(x)=0$ is reduced to one univariate equation.  

\begin{restatable}[]{prop}{propreducepolynomial}
\label{prop:reduce_polynomial}%
Let $W\in \R^{d\times n}$ be a row reduced matrix whose rows form a basis of $S^\perp$, and let $T\in \R^d$ be fixed. Consider $F_T$ as in Eq.~\eqref{F}. 
Assume that there exist an open interval $\mathcal{E} \subseteq \mathbb{R}$, a differentiable function $\varphi\colon \mathcal{E}\rightarrow \mathbb{R}^n_{>0}$, and an index $j$ such that 
$ F_{T,\ell} ( \varphi(z) )=0$ for all $\ell\neq j$ and every $z\in\mathcal{E}$.  
Then, the set of positive solutions of the system $F_T(x)=0$ contains the solutions to
\begin{equation}\label{eq:onevariable}
	 F_{T,j} ( \varphi(z) )=0,\quad x_\ell = \varphi_\ell(z), \quad \ell=1,\ldots,n \mbox{ and } z\in \mathcal{E}.
\end{equation}
Furthermore, for a positive steady state $x^*=\varphi(z)$ such that $\varphi_i'(z)\neq 0$ for some $i$, it holds
\[\det (J_{F_T}(x^*))=\frac{(-1)^{i+j}}{\varphi_i '(z)}(F_{T,j}\circ \varphi)'(z) \det(J_{F_T}(x^*)_{J,I}),\]
where $I=\{1,\dots,n\}\setminus \{i\}$ and $J=\{1,\dots,n\}\setminus \{j\}$.
\end{restatable}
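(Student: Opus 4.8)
The plan is to dispatch the two assertions separately. The first is immediate: if $z \in \mathcal{E}$ satisfies $F_{T,j}(\varphi(z)) = 0$, then combined with the standing hypothesis $F_{T,\ell}(\varphi(z)) = 0$ for all $\ell \neq j$ one gets $F_T(\varphi(z)) = 0$, and since $\varphi(z) \in \R^n_{>0}$ this point is a positive solution of $F_T(x) = 0$. This yields the claimed inclusion of solution sets.

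For the determinant identity, the key observation is obtained by differentiating the composite $G := F_T \circ \varphi \colon \mathcal{E} \to \R^n$. The chain rule gives $G'(z) = J_{F_T}(\varphi(z))\,\varphi'(z)$, where $\varphi'(z)\in\R^n$ is the column vector of derivatives. Since $G_\ell \equiv 0$ on $\mathcal{E}$ for every $\ell \neq j$ by hypothesis, differentiating in $z$ annihilates every component of $G'(z)$ except the $j$-th, so that $G'(z) = (F_{T,j}\circ\varphi)'(z)\,e_j$ with $e_j$ the $j$-th standard basis vector. Abbreviating $A := J_{F_T}(x^*)$, $v := \varphi'(z)$ and $c := (F_{T,j}\circ\varphi)'(z)$, this is the single linear relation $A v = c\, e_j$.

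To extract $\det(A)$ I would left-multiply this relation by the adjugate and use $\mathrm{adj}(A)\,A = \det(A)\,\mathrm{Id}$, which holds for every square matrix; reading off the $i$-th coordinate gives $\det(A)\,v_i = c\,(\mathrm{adj}(A))_{ij}$. By definition of the adjugate, $(\mathrm{adj}(A))_{ij} = (-1)^{i+j}\det(A_{J,I})$ with $J = \{1,\dots,n\}\setminus\{j\}$ and $I = \{1,\dots,n\}\setminus\{i\}$, and dividing by $v_i = \varphi_i'(z)\neq 0$ produces the stated formula. There is no deep obstacle here; the care-requiring points are purely bookkeeping: getting the adjugate convention right (the $(i,j)$ entry of $\mathrm{adj}(A)$ is the $(j,i)$ cofactor, which accounts for the transposed index set $A_{J,I}$ and for the sign $(-1)^{i+j}$), and using the left identity $\mathrm{adj}(A)\,A = \det(A)\,\mathrm{Id}$ rather than Cramer's rule with $A^{-1}$, so that the argument remains valid even at degenerate steady states where $A$ is singular.
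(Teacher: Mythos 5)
Your proposal is correct and follows essentially the same route as the paper: both hinge on the chain-rule identity $J_{F_T}(\varphi(z))\varphi'(z)=(F_{T,j}\circ\varphi)'(z)\,e_j$, and then extract $\det(J_{F_T}(x^*))$ by cofactor bookkeeping. The paper packages the final step as a column replacement followed by Laplace expansion along the $i$-th column, whereas you invoke $\mathrm{adj}(A)A=\det(A)\,\mathrm{Id}$; these are the same computation in different clothing, and both handle the degenerate case correctly.
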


In practice, $\varphi_i(z)=z$, $\mathcal{E}\subseteq \R_{>0}$, and the solutions to $F_T(x)=0$ are in one to one correspondence with the solutions to Eq.~\eqref{eq:onevariable}. 
In this case, given the positive solutions $z_1<\dots < z_\ell$ of Eq.~\eqref{eq:onevariable}, 
 the sign of the derivative of $F_{T,j} ( \varphi(z) )$ evaluated at $z_1,\dots,z_\ell$ alternates if all the steady states are non-degenerate. 
If additionally the sign of $\frac{1}{\varphi_i'(z)} \det(J_{F_T}(\varphi(z))_{J,I})$ is independent of the choice of   $z$, then the sign of $\det (J_{F_T}(\varphi(z_\ell)))$ depends only on the sign of  the derivative of $(F_{T,j} \circ \varphi)$ at $z_\ell$ and hence alternates as well. We will exploit this fact below.

\subsection{Algebraic criteria for (bi)stability}
We present now the Hurwitz criterion \cite{Barnett:hurwitz-routh,Anagnost}, which is used to decide whether all the roots of a polynomial have negative real part, or whether there is at least one root with positive real part.

\begin{criterion}[Hurwitz]
	Let $p(x)=a_sx^s +\ldots  +a_0$ be a real polynomial with $a_s>0$ and $a_0\neq 0$. The Hurwitz matrix $H=(h_{ij})$ associated with $p$ has entries $h_{ij}=a_{s-2i+j}$ for $i,j=1,\ldots ,s$, by letting $a_k=0$ if   $k\notin \{0,\dots,s\}$:
	\[H=\left[\begin{array}{cccccc}
			a_{s-1} & a_s & 0 & 0 &  \cdots & 0 \\[-2pt]
			a_{s-3} & a_{s-2} & a_{s-1} & a_s & \cdots & 0 \\[-2pt]
			\vdots & \vdots & \vdots &  \vdots & \vdots & \vdots \\[-2pt]
			0 & 0 & 0 & a_{6-s} & \cdots & a_2 \\[-2pt]
			0 & 0 & 0 & 0 &\cdots & a_0
	\end{array}\right]\in \R^{s\times s}.\]
The $i^{\rm th}$ Hurwitz determinant is defined to be $H_i=\det(H_{I,I})$, where $I=\{1,\ldots ,i\}$. 
Then, all roots of $p$ have negative real part if and only if $H_i>0$ for all $i=1,\ldots ,s$. 
If $H_i<0$ for some $i$, then at least one root of $p$ has positive real part. 
\end{criterion}

Importantly, $H_s=a_0H_{s-1}$.
Pairs of imaginary roots (leading to Hopf bifurcations) arise when at least one of the $H_i$  vanish \cite{yang-hopf} (see \cite{Shiu-Hopf} in the context of reaction networks).

For the polynomial $q_{x}(\lambda)$ or $q_{\phi(\xi)}(\lambda)$,  the Hurwitz determinants are typically rational functions in $x$ or $\xi$. 

\begin{example}
In Example~\ref{ex:HK2},
the Hurwitz determinants of the characteristic polynomial $q_x$ are
\[H_1  =  \k_2(x_2+x_3)+\k_1+\k_3,\quad\textrm{ and }\quad \ H_2 =( (\k_1x_2+\k_3x_3)\k_2+\k_1\k_3)H_1.\]
Both determinants are polynomials in $\k,x$ with positive coefficients and hence positive for all $\k\in\R^3$ and $x\in \R^4_{>0}$. 
By the Hurwitz criterion, any  positive steady state is exponentially stable. 
This network has exactly one steady state in each stoichiometric compatibility class  \cite{FeliuPlos}, and  we now additionally conclude that \emph{the only steady state is exponentially stable}.
\end{example}
 
Often for small networks, all but the last Hurwitz determinants are positive. 
Then,  the stability of a steady state $x^*$ is fully determined by the sign of $H_s$, which 
agrees with the sign of the independent term of $q_{x^*}(\lambda)$, which in turn is  $(-1)^s\det(J_{F_T}(x^*))$ by Proposition \ref{prop:basic_jacobian}. Together with Proposition~\ref{prop:reduce_polynomial}   we obtain the following theorem, proved in Section~\ref{app:proofs}.

\begin{restatable}[]{thm}{thmbistability}
\label{thm:bistability}%
Let $T,\mathcal{E},\varphi, i, j,I,J$ be as in Proposition~\ref{prop:reduce_polynomial}, and $q_{x^*}$ be the characteristic polynomial of the matrix $Q_{x^*}$ as in Proposition~\ref{prop:basic_jacobian}. Assume that 
\begin{itemize} 
\item the sign of $\frac{1}{\varphi_i'(z)}\det(J_{F_T}(\varphi(z))_{J,I})$ is independent of $z\in \mathcal{E}$ and is nonzero, and
\item the first $s-1$ Hurwitz determinants of $q_{x^*}$ are positive for all positive steady states $x^*$.
\end{itemize}				
If $z_1<\dots < z_\ell$ are the positive solutions to  Eq.~\eqref{eq:onevariable} and all are simple, then either $\varphi(z_1),\varphi(z_3),\dots$ are exponentially stable and $\varphi(z_2),\varphi(z_4),\dots$ are unstable, or the other way around. 
Specifically, $\varphi(z_1)$ is exponentially stable if and only if 
\begin{equation}\label{eq:importantsign}
\frac{(-1)^{s+i+j}}{\varphi_i'(z_1)}(F_{T,j}\circ \varphi)'(z_1) \det(J_{F_T}(\varphi(z_1))_{J,I})>0.
\end{equation}
\end{restatable}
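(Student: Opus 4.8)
The plan is to combine the two hypotheses to show that the sign of $\det(J_{F_T}(x^*))$ alternates along the ordered positive solutions, and then translate this sign via the Hurwitz machinery into exponential stability versus instability. First I would apply Proposition~\ref{prop:reduce_polynomial} to each positive steady state $x^*=\varphi(z_k)$, which gives
\[
\det(J_{F_T}(\varphi(z_k)))=\frac{(-1)^{i+j}}{\varphi_i'(z_k)}(F_{T,j}\circ\varphi)'(z_k)\,\det(J_{F_T}(\varphi(z_k))_{J,I}).
\]
By the first hypothesis, the factor $\tfrac{1}{\varphi_i'(z)}\det(J_{F_T}(\varphi(z))_{J,I})$ has a fixed nonzero sign $\sigma$ independent of $z$, so the sign of $\det(J_{F_T}(\varphi(z_k)))$ equals $(-1)^{i+j}\sigma$ times the sign of $(F_{T,j}\circ\varphi)'(z_k)$.

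Next I would use the simplicity of the roots $z_1<\dots<z_\ell$. Since each $z_k$ is a simple zero of the univariate function $g(z):=(F_{T,j}\circ\varphi)(z)$, the derivative $g'(z_k)$ is nonzero and its sign alternates as $k$ increases: between two consecutive simple zeros the function changes sign, so $g'$ has opposite signs at consecutive roots. Combined with the previous paragraph, this shows that $\det(J_{F_T}(\varphi(z_k)))$ is nonzero and alternates in sign with $k$; in particular every $\varphi(z_k)$ is non-degenerate, and the parity pattern splits the steady states into two interleaved classes according to the sign of the determinant.

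Then I would invoke the Hurwitz criterion together with Proposition~\ref{prop:basic_jacobian} to convert the determinant sign into a stability statement. By Proposition~\ref{prop:basic_jacobian}, the independent term of $q_{x^*}$ is $(-1)^s\det(J_{F_T}(x^*))$, and since $H_s=a_0 H_{s-1}$ the last Hurwitz determinant has the sign of this independent term whenever $H_{s-1}>0$. Under the second hypothesis, the first $s-1$ Hurwitz determinants are positive at every positive steady state, so by the Hurwitz criterion $x^*$ is exponentially stable precisely when $H_s>0$, i.e. when the independent term $(-1)^s\det(J_{F_T}(x^*))$ is positive, and $x^*$ has an eigenvalue with positive real part (hence is unstable) when it is negative. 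Substituting the expression for the determinant and accounting for the factor $(-1)^s$, exponential stability of $\varphi(z_k)$ is equivalent to
\[
\frac{(-1)^{s+i+j}}{\varphi_i'(z_k)}(F_{T,j}\circ\varphi)'(z_k)\,\det(J_{F_T}(\varphi(z_k))_{J,I})>0,
\]
which for $k=1$ is exactly Eq.~\eqref{eq:importantsign}. Because this quantity alternates in sign with $k$, the odd-indexed and even-indexed steady states partition into the stable and unstable groups as claimed.

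The main obstacle I expect is the careful bookkeeping needed to justify that positivity of $H_1,\dots,H_{s-1}$ plus the sign of $H_s$ suffices to read off exponential stability versus instability, rather than merely ruling out all-negative-real-part. The subtle point is that a negative $H_s$ (with all earlier determinants positive) must be argued to force a root with strictly positive real part, not merely a purely imaginary pair; this follows because $H_s<0$ makes the independent term, i.e. the product of the eigenvalues up to sign, have the wrong sign, forcing an odd number of real eigenvalues on the positive axis, but stating this cleanly from the criterion as quoted requires noting that the criterion's second clause ($H_i<0$ for some $i$ implies a root with positive real part) applies directly to $i=s$. The remaining steps — the alternation of $g'$ at simple zeros and the fixed sign of the cofactor term — are routine once the hypotheses are in hand.
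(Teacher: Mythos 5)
Your proposal is correct and follows essentially the same route as the paper's proof: apply Proposition~\ref{prop:reduce_polynomial} to express $\det(J_{F_T}(\varphi(z_k)))$ via $(F_{T,j}\circ\varphi)'(z_k)$ and the constant-sign cofactor, use simplicity of the roots to get alternation of the derivative's sign, and convert to stability through $H_s=a_0H_{s-1}$, Proposition~\ref{prop:basic_jacobian} and the two clauses of the Hurwitz criterion. Your extra remark that the case $H_s<0$ invokes the instability clause of the criterion with $i=s$ is a point the paper leaves implicit, but it is the same argument.
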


\begin{figure}[t!]
	\centering
	\includegraphics[scale=1]{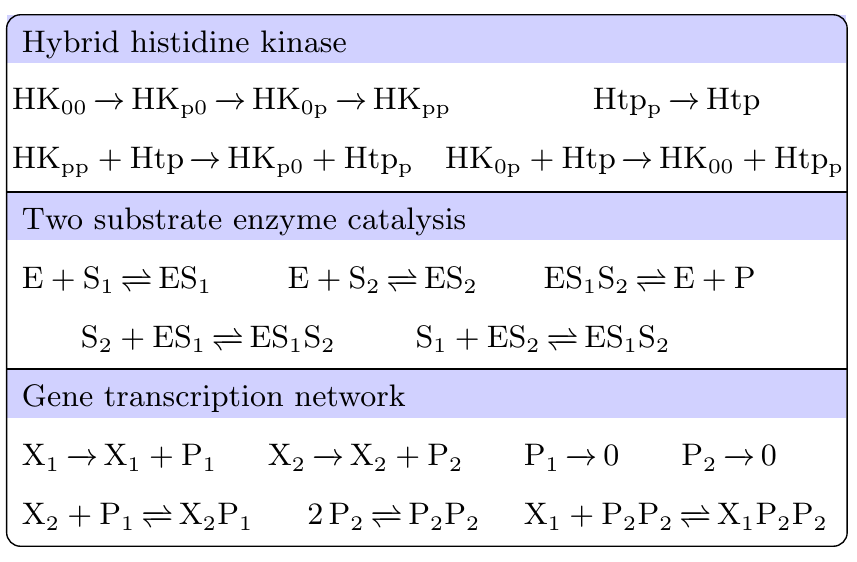} 
	\caption{{ \footnotesize Three networks where stability of steady states is fully determined.} }	
	\label{Figure 1}
\end{figure}

In practice, 
$ F_{T,j}(\varphi(z))=\frac{a(z)}{b(z)}$ is a rational function in $z$ with  $b(z)>0$ in $\mathcal{E}$. Then the zeros of $F_{T,j}(\varphi(z))$ are the roots of $a(z)$, and the signs of  $( F_{T,j} \circ \varphi)'(z^*) $ and $a'(z^*)$ agree for all $z^*\in \mathcal{E}$ such that $a(z^*)=0$ (Lemma~\ref{lemma:rationalfunction} in Section \ref{app:proofs}).

\begin{example}\label{ex:hybrid}
We illustrate Theorem~\ref{thm:bistability} with a \textbf{hybrid histidine kinase}  network with mass-action kinetics~\cite{feliu:unlimited}, see Fig.~\ref{Figure 1}. 
We rename  the species as follows: $X_1$=\ce{HK00}, $X_2$=\ce{HK_{p0}}, $X_3$=\ce{HK_{0p}}, $X_4$=\ce{HK_{pp}}, $X_5$=\ce{Htp} and $X_6$=\ce{Htp_p}.  The associated ODE system is 
    		\begin{align*}
\tfrac{dx_1}{dt} &=  -\k_1 x_1  + \k_{4} x_3x_5  & \tfrac{dx_4}{dt}  &= \k_{3} x_3     -\k_{5} x_4 x_5 \\  
\tfrac{dx_2}{dt} &= \k_1 x_1  - \k_{2} x_2 +\k_5 x_4 x_5  & \tfrac{dx_5}{dt} &= -\k_{4} x_3x_5    -\k_{5} x_4x_5  +\k_{6} x_6 \\
\tfrac{dx_3}{dt} &=  -\k_{3} x_3    +\k_{2} x_2 -\k_4 x_3 x_5  & \tfrac{dx_6}{dt} &= \k_{4} x_3x_5     -\k_{6} x_6 +\k_5 x_4 x_5.  
			\end{align*}
The conservation laws of the system are $T_1= x_1+x_2+x_3+x_4$, $T_2= x_5+x_6$. 
Hence, by Eq.~ \eqref{F}, 
\[ F_T(x)= \left[ \begin{array}{c} x_{1}+x_{2}+x_{3}+x_{4}-T_{1}
\\ \k_{5}x_{4}x_{5}+\k_{1}x_{1}-\k_{2}x_{2
}\\ -\k_{4}x_{3}x_{5}+\k_{2}x_{2}-\k_{3}x_{3}\\ -\k_{5}x_{4}x_{5}+\k_{3}x_{3}
\\ x_{5}+x_{6}-T_{2}\\ \k_{4}x_{3}x_{5}+\k_{5}x_{4}x_{5}-\k_{6}x_{6}\end {array}
 \right].
\]
Here $s=4$. 
The existence of three positive steady states for this network was established in~\cite{feliu:unlimited}.
We compute $q_x$ and the Hurwitz determinants in {\tt Maple 2019} (see accompanying \texttt{Maple} file) and obtain that all but the last are polynomials in $x$ and $\k$ with positive coefficients; hence they are positive when evaluated at a positive steady state.

We proceed to decide whether Theorem~\ref{thm:bistability} applies.  In~\cite{feliu:unlimited},
 it was shown that the assumptions of Proposition~\ref{prop:reduce_polynomial} hold with $i=j=5$, with $z=x_5$, $F_{T,5}$ corresponding to the conservation law with $T_2$, and $\mathcal{E}=\R_{>0}$.  That is, the solutions to the four steady state equations together with the conservation law associated with $T_1$ can be parametrized by a function $\varphi$ that only depends on $z=x_5$. The denominator of $(F_{T,5}\circ\varphi)(z)$  is positive and its numerator is a polynomial of degree $3$ in $z$, which can have $1,2$ or $3$ positive roots, depending on the choice of parameters. Additionally, $\det(J_{F_T}(\varphi(x_5))_{J,I})$ is a rational function with all coefficients positive.  
Thus, we are in the situation of Theorem~\ref{thm:bistability}. The independent term of the numerator of  $(F_{T,5}\circ\varphi)(z)$ is negative, its first root has positive derivative.
Furthermore, the sign of $\frac{(-1)^{s+i+j}}{\varphi'_5(z)}=(-1)^{4+5+5}$ in Eq.~\eqref{eq:importantsign} is $+1$. Hence, if $z_1$ is the smallest positive root of $(F_{T,5}\circ\varphi)(z)$, the sign of 
$$\frac{(-1)^{s+i+j}}{\varphi_i'(z_1)}(F_{T,j}\circ \varphi)'(z_1) \det(J_{F_T}(\varphi(z_1))_{J,I})$$ 
is positive as well. 
 We conclude, using Theorem \ref{thm:bistability}, that whenever the network has three positive steady states coming from the roots $z_1<z_2<z_3$ of $(F_{T,5}\circ\varphi)(z)$, then the steady states $\varphi(z_1)$ and $\varphi(z_3)$ are  exponentially  stable and the steady state $\varphi(z_2)$ is unstable.  We have shown that \emph{this network displays bistability whenever there are three positive steady states}.
 \end{example}

\medskip
Theorem~\ref{thm:bistability} implies the existence of  bistability if the network admits at least three positive steady states and  \eqref{eq:importantsign} holds. Indeed, as the stability of the positive steady states alternates and that the first one is exponentially stable (when ordered as in Theorem~\ref{thm:bistability}),  there exist at least two exponentially stable steady states.
If \eqref{eq:importantsign} does not hold, then bistability will only follow if the network admits four positive steady states.
 
If $F_{T,j}(\varphi(z))$ is a rational function with positive denominator, we verify \eqref{eq:importantsign} by   finding the sign of $\frac{1}{\varphi_i'(z)}\det(J_{F_T}(\varphi(z))_{J,I})$ (which is independent of $z$), and inspecting the sign of the independent term of the numerator of $F_{T,j}(\varphi(z))$ to deduce the sign of 
$(F_{T,j}\circ \varphi)'(z_1)$.

In order to establish that the network admits three positive steady states, several strategies can be employed. 
First, one can directly attempt to find values of the parameters for which the numerator of $F_{T,j}(\varphi(z))$ has three positive roots. By Descartes' rule of signs and the degree of the polynomial, upper bounds on the possible number of positive roots can easily be found.

Alternatively, numerous existing methods to determine multistationarity \cite{crnttoolbox,MullerSigns,conradi-switch,PerezMillan,FeliuPlos,control} can be employed, see \cite{joshi-shiu-III} for a review. Most of these methods give a choice of (rational) parameters for which the network has \emph{at least two} positive steady states, but we require \emph{three}. 
An exception is the method in \cite{dickenstein:regions}, which via the study of  the Newton polytope of the system  of interest,  might directly certify the existence of at least three positive solutions. 

For the purpose of finding three positive  steady states, perhaps the simplest approach is to consider a choice of parameters for which multistationarity exists, as for example returned by the CRNT toolbox \cite{crnttoolbox}, and find the solutions to the corresponding system $F_T(x)=0$ (or equivalently $F_{T,j}(\varphi(z))=0$) using available software. This will typically provide the three desired solutions.  Note that this last step is not necessarily symbolic. To certify  that chosen parameters give rise to three positive steady states using symbolical methods, we use Sturm sequences on the numerator of  $F_{T,j}(\varphi(z))$  to count the number of positive roots \cite{basu:book}. 

In this work, we have mainly applied the method  in \cite{FeliuPlos} to determine multistationarity. The approach works for mass-action kinetics and relies on the  determinant of $J_{F_T}(\phi(\xi))$, for a positive parametrization $\phi$. Assume the network is conservative and has no  steady state at the boundary of  any  $\mathcal{P}_T$. Then, if $\k$  and $\xi$ are such that  $(-1)^s\det( J_{F_T}(\phi(\xi)))<0$, then the network displays multistationarity in the stoichiometric compatibility class containing $\phi(\xi)$.

\begin{figure}[b!]
\centering
\includegraphics[scale=0.87]{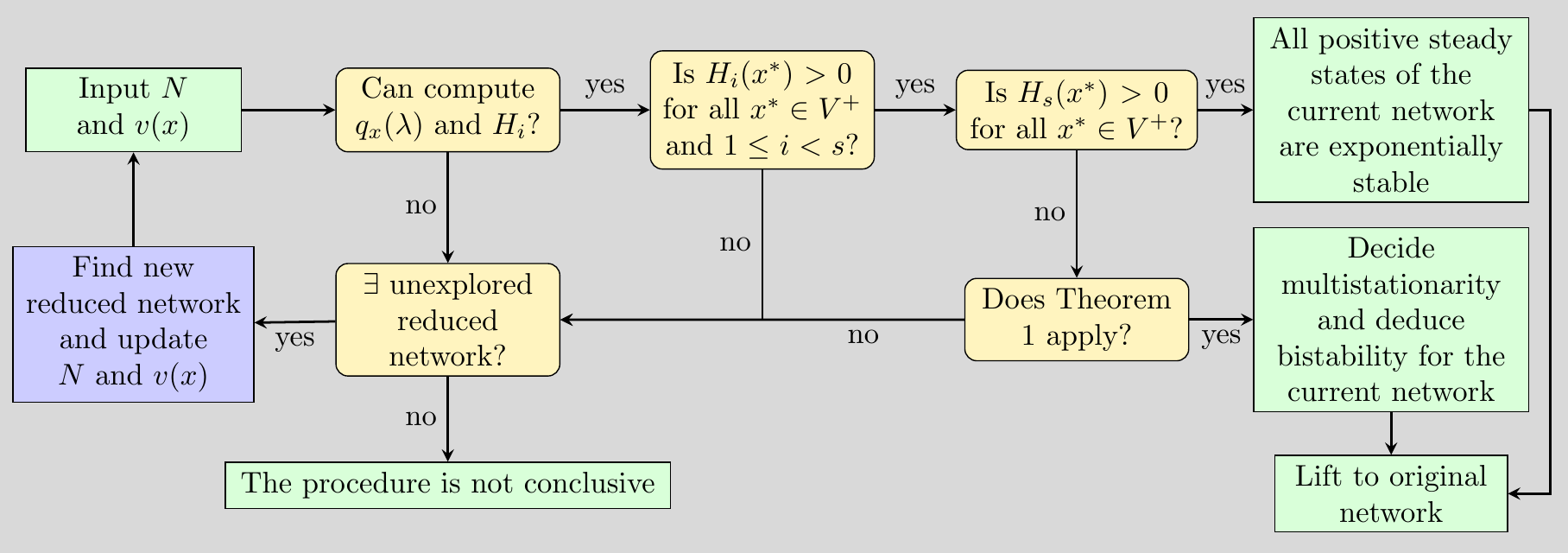}
\caption{{\footnotesize Flow chart of the proposed approach to study the stability of steady states.  $H_i$ denotes the $i$-th Hurwitz determinant of $q_{x}$. 
To decide whether $H_i(x^*)>0$ for all $x^*\in V^*$ and some $i$,   first check whether $H_i(x)>0$ for all $x\in \R^n_{>0}$. If not, find a positive parametrization $\phi$ (if possible, see text), and decide whether $H_i(\phi(\xi))>0$ for all $\xi\in \R^d_{>0}$. If a parametrization cannot be found, or the statement $H_i(x^*)>0$ for all $x^*\in V^*$  cannot be verified, we follow the ``no'' arrow out of the box in the decision diagram. The lift to the original network means that the conclusions on the currently analyzed network also hold for the original network in an open parameter region.  
}}\label{fig:method}
\end{figure}
\section{Symbolic determination of stability}\label{sec:procedure}

We now combine the ingredients introduced in the previous section into a strategy to determine the stability of positive steady states  and, importantly, detect bistability, using (mainly) the Hurwitz criterion and Theorem~\ref{thm:bistability}. 
  Given a reaction network with kinetics $v(x)$
the steps taken are depicted in Fig.~\ref{fig:method}. Specifically, we find the characteristic polynomial $q_x$ and the Hurwitz determinants. If all determinants are positive, then all positive steady states are exponentially stable. If only the last Hurwitz determinant can be negative, then we attempt to apply  Theorem~\ref{thm:bistability}. We only find a parametrization $\phi$ as in Eq.~\eqref{eq:parametrization} when the sign of $H_i$ is not already determined for arbitrary   $x\in \R^n_{>0}$, that is, only when it is necessary to evaluate at a positive steady state to decide the sign of $H_i$.
 
If some of the steps fail, then we consider reduced networks by  removing either reactions that do not change the stoichiometric subspace, or intermediates. If stability is determined for a reduced network, then we conclude that the original network has at least the same number of positive steady states and stability properties as the reduced network in an open parameter region. In particular, if the reduced network has bistability, then so does the original network. For an expansion on how the open parameter regions arise when lifting rate constants and the properties of steady states, we refer the reader to~\cite{feliu:intermediates,joshi-shiu-II}.

If stability is not determined for the chosen reduced network, then we consider another one, until all possibilities have been explored. If a reduced network does not display multistationarity, then further reductions on this network lead to networks that neither are multistationary. Being a reduced network of a reduced network establishes a partial order in the set of all reduced networks obtained by removal of intermediates or reactions. Therefore, a suitable strategy is to start investigating the largest networks. If one such network is not multistationary, all reduced networks smaller than this one can be disregarded.

\smallskip
We now use  this approach on the remaining networks in Fig.~\ref{Figure 1}.
We consider a \textbf{two substrate enzyme catalysis} mechanism, comprising an enzyme $E$ that binds two substrates, $S_1$ and $S_2$, and catalyzes the reversible conversion to $P$.  Taken with mass-action kinetics this network has one  positive steady state in each stoichiometric compatibility class for any choice of rate constants $\k$~\cite{FeliuPlos}. 
All but the last of the four Hurwitz determinants  are positive for $x\in \R^6_{>0}$. We find a positive parametrization $\phi$ by  solving the steady state equations in the concentrations of \ce{ES1}, \ce{ES2}, \ce{ES1S2} and \ce{P}  using the procedure  in  \cite{Fel_elim}, see Appendix~\ref{app:examples}. After evaluating at $\phi$, $H_4$ becomes a rational function with only positive coefficients. Hence, all Hurwitz determinants are positive   at a positive steady state, showing that \emph{the only positive steady state in each stoichiometric compatibility class is exponentially stable.}

  \begin{figure}[t!]
	\centering
	\includegraphics[scale=1]{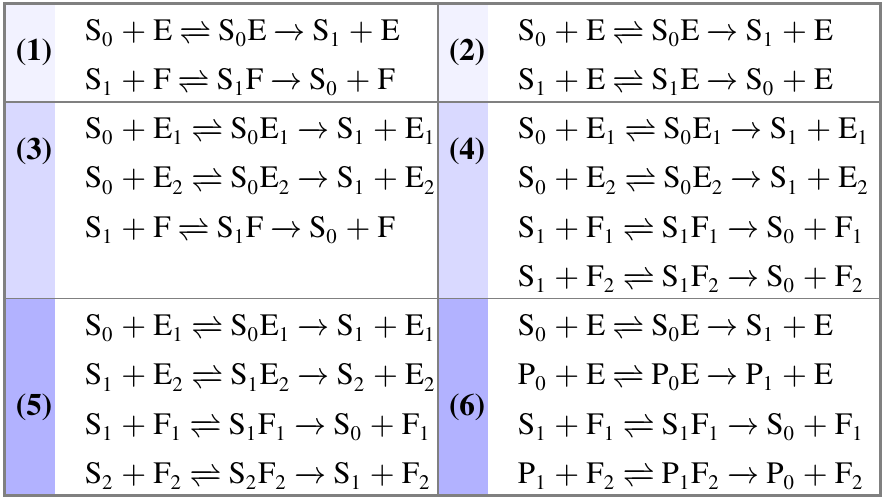} 
	\caption{{\footnotesize Monostationary networks. In all networks, the symbols \ce{E}, \ce{F}, \ce{S}, \ce{P} refer to kinases, phosphatases, and substrates respectively. Taken with mass-action kinetics, all networks admit exactly one positive steady state in each stoichiometric compatibility class, which further is exponentially stable. Networks (1)-(4) model the phosphorylation of one substrate via different mechanisms. Network (5) models a substrate with two phosphorylation sites, while network (6) models the phosphorylation of two different substrates.} }		
	\label{Figure 2}
\end{figure}
\smallskip
Next, we consider the \textbf{gene transcription network} in Fig.~\ref{Figure 1} with mass-action kinetics.
For any choice of   rate constants there exist  at least two positive steady states  in some  stoichiometric compatibility class \cite{FeliuPlos}. 
The computation of the Hurwitz determinants for arbitrary $x\in \R^7_{>0}$ gives that only $H_1,H_2$ are positive, but after evaluating at a positive parametrization, all but the last Hurwitz determinants are positive. We proceed to verify the assumptions of Theorem~\ref{thm:bistability}, see Appendix~\ref{app:examples}. We obtain that the maximum number of positive steady states in any stoichiometric compatibility class is three, and that, \emph{whenever the network has one positive steady state, then it is exponentially stable, and if it has three positive steady states, then two of them are exponentially stable and one is unstable. }
Hence bistability arises in this network.

\subsection*{Bistability in cell signaling}
After having illustrated our approach with selected examples, we now investigate relevant cell signaling motifs.
We follow the flow chart in Fig.~\ref{fig:method} and perform all computations  in \texttt{Maple 2019} (see accompanying \texttt{Maple}  file).
To reduce computational cost, we compute first the Hurwitz determinants of a generic degree $n$ polynomial (with unspecified coefficients), and then evaluate the determinant at the  coefficients of $q_x$. 
We disregarded the Routh table from the package {\tt DynamicSystems} as the computational cost was higher.
	
All networks in Fig.~\ref{Figure 2} are known to be \textbf{monostationary} under mass-action kinetics  \cite{Feliu:royal}. All Hurwitz determinants of $q_x$ are positive for positive  $x$, without the need of a positive parametrization, see Appendix~\ref{app:examples}. Hence for any choice of rate constants, each network in Fig.~\ref{Figure 2} \emph{has exactly one positive steady state in each stoichiometric compatibility class, which further is exponentially stable}.

\begin{figure*}[t!]
	\centering
	\includegraphics[scale=0.93]{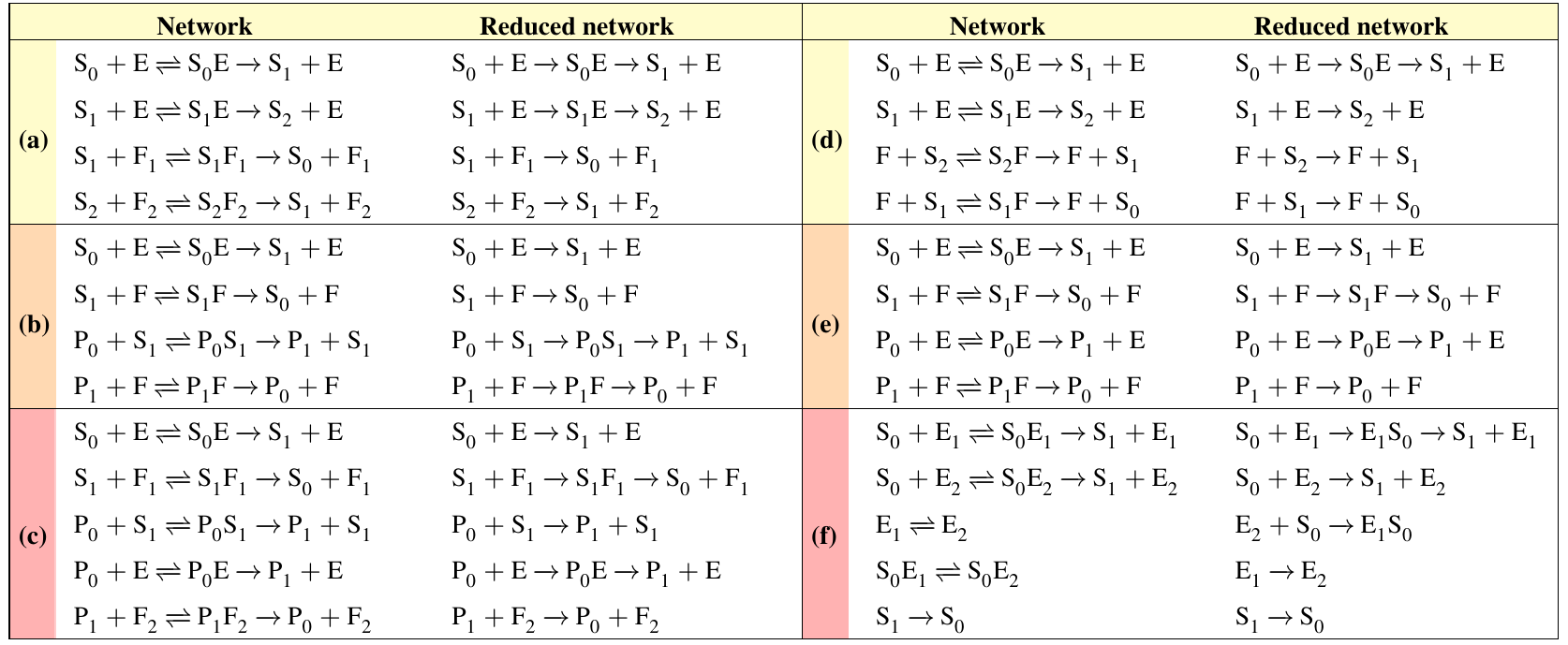} 
	\caption{{\footnotesize Multistationary networks and reductions to assert bistability. (a) \ce{E1} and \ce{E2} are two conformations of a kinase catalyzing the phosphorylation of  \ce{S0} \cite{feng:allosteric}. The reduced network is obtained by removing the intermediate \ce{E2 S0} and all reverse reactions. (b)  Cascade of two one-site modification cycles with the same phosphatase \ce{F}. The reduced network is obtained by removing the intermediates \ce{S0 E} and \ce{S1 F} and all reverse reactions. (c) Cascade of one-site modification cycles where the same kinase \ce{E} acts in both layers. The reduced network is obtained by removing the reverse reactions and intermediates \ce{S0 E}, \ce{P0 S1} and \ce{P1F2}.
(d) Distribute and sequential phosphorylation of a substrate. The reduced network is obtained by removing the intermediates \ce{S1 E}, \ce{S2 F} and \ce{S1 F} and all reverse reactions. (e) Phosphorylation of two substrates by the same kinase and phosphatase. The reduced network is obtained by removing the intermediates \ce{S0 E} and \ce{P1 F} and all reverse reactions. (f) Double phosphorylation of a substrate by the same kinase and two different phosphatases. The reduced network is obtained by removing all reverse reactions and the intermediates \ce{S1 F1} and \ce{S2 F2}. }}		
	\label{Figure 3}
\end{figure*}

In the examples so far, we have not employed network reduction techniques, because all steps of the method could be carried through and  the stability of a steady state depended only on the sign of the determinant of the Jacobian. This scenario is quite restrictive, as it implies that instabilities arise from a unique eigenvalue with positive real part.

The networks in Fig.~\ref{Figure 3} are all known to be \textbf{multistationary} with mass-action kinetics  \cite{Feliu:royal,feng:allosteric}.
Our method fails on the original networks: for network (c), the computation of the Hurwitz determinants was not possible in a  regular PC due to lack of memory, and for the rest of the networks  $H_i>0$ for $i\neq s$ does not hold.
However,  Theorem~\ref{thm:bistability} applies to the reduced networks  in Fig.~\ref{Figure 3}. In particular, all reduced networks in Fig.~\ref{Figure 3} display bistability whenever they have three positive steady states. In this case, two of the steady states are exponentially stable and the third is unstable. Hence, after lifting stability to the original network using the results stated in Subsection~\ref{sec:red},  \emph{for all networks in Fig.~\ref{Figure 3}, there is an open parameter region where the network has two exponentially stable positive steady states}.
We report only on a maximal reduced network that allows us to assert bistability for the original network. This means that, in practice, we might have checked several other reduced networks where Theorem~\ref{thm:bistability} did not apply.

See Appendix~\ref{app:examples} and the accompanying \texttt{Maple}  file for details.

\section{Proofs of the main results}\label{app:proofs}
We prove here the main results of our work. For completeness, we include the notation and the statements here, sometimes in an expanded form.

\subsection{Proof of Proposition~\ref{prop:basic_jacobian}} 
Consider a matrix $R_0\in \mathbb{R}^{n\times s}$ whose columns form a basis of $S$. This basis gives the system of coordinates in $S$. Therefore, given coordinates $z=(z_1,\ldots ,z_s)$ in $S$, the vector $R_0z$ is the vector of coordinates in the canonical basis of $\mathbb{R}^n$. Conversely, selecting a matrix $R_1\in \mathbb{R}^{s\times n}$ such that $R_1R_0=I_{s\times s}$, we can write a vector $x\in S$ given in the canonical basis of $\mathbb{R}^n$, as a vector in local coordinates, by performing the product $R_1x$.

Using these matrices, the ODE system restricted to  $(x^*+S)\cap \R^n_{\geq 0}$ in local $S$ coordinates is
		\[\dot{z}=R_1f(R_0z+x^*)\]
after translating the steady state $x^*$ to the origin.
The Jacobian matrix associated with this system at $0$ is $R_1 J_f(x^*) R_0$. The following proposition shows some basic properties of $R_1 J_f(x^*) R_0$, and includes the results stated in Propostion~\ref{prop:basic_jacobian}. These properties follow from basic linear algebra and, for example, statements (i) and (ii) in Proposition~\ref{prop:basic_jacobian_II} appear in \cite[Appendix A]{Banaji-Pantea}. We include a proof for completeness.  

\begin{prop}\label{prop:basic_jacobian_II}
Consider a reaction network with rate function $f(x)=Nv(x)$. Let $R_0\in\mathbb{R}^{n\times s}$ and $R_1\in\mathbb{R}^{s\times n}$ be matrices such that the columns of $R_0$ form a basis of the stoichiometric subspace $S$, and $R_1R_0=I_{s\times s}$. The following statements hold:

\begin{enumerate}[(i)]
\item $R_1J_f(x^*)R_0 = L J_v(x^*) R_0$, where $L\in \R^{s\times m}$ is the matrix such that $N=R_0 L$. In particular, $R_1J_f(x^*)R_0$ does not depend on the choice of $R_1$.

\item If $R_0,R_0'\in \R^{n\times s}$ are two matrices with column span $S$, and $L,L'$ are as in (i) for $R_0,R_0'$ respectively, then the matrices $L' J_v(x^*) R'_0$ and $L J_v(x^*) R_0$ are similar.

\item For a positive steady state $x^*$, the characteristic polynomials $p_{J_f}(\lambda)$ and $p_{LJ_v(x^*)R_0}(\lambda)$ satisfy $p_{J_f}(\lambda)=\lambda^{n-s}p_{LJ_v(x^*)R_0}(\lambda)$ for any choice of $R_0$. 

\item The independent term of $p_{LJ_v(x^*)R_0}(\lambda)$ (or the coefficient of degree $n-s$ of $p_{J_f}(\lambda)$) equals $(-1)^s\det(J_{F_T}(x^*))$, with $F_T$ as in Eq.~\ref{F}, for any choice of row-reduced matrix of conservation laws $W$. 
\end{enumerate}
\end{prop}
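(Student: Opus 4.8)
My plan is to handle the four items in order, reducing each to elementary linear algebra once I have in hand the chain-rule identity $J_f(x)=N\,J_v(x)$ (from $f=Nv$) together with the factorization $N=R_0L$. For (i) I would just substitute and cancel: $R_1J_f(x^*)R_0=R_1N\,J_v(x^*)R_0=R_1R_0\,L\,J_v(x^*)R_0=L\,J_v(x^*)R_0$, using $R_1R_0=I_s$ in the last step; since the final expression does not mention $R_1$, independence of the choice of $R_1$ is immediate. For (ii) I would use that any two matrices $R_0,R_0'$ whose columns are bases of $S$ differ by an invertible $P\in\R^{s\times s}$ with $R_0'=R_0P$; plugging into $N=R_0L=R_0'L'$ and cancelling the full-column-rank factor $R_0$ (left-multiply by a left inverse) gives $L'=P^{-1}L$, whence $L'J_v(x^*)R_0'=P^{-1}\big(L\,J_v(x^*)R_0\big)P$, exhibiting the similarity. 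The payoff I will exploit later is that $\det(L\,J_v(x^*)R_0)$, and more generally $q_{x^*}$, does not depend on the chosen basis $R_0$.

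\textbf{Item (iii) and reduction of (iv).} For (iii) I would invoke the standard determinantal identity $\det(\lambda I_n-AB)=\lambda^{n-s}\det(\lambda I_s-BA)$ for $A\in\R^{n\times s}$, $B\in\R^{s\times n}$ (provable from the two Schur-complement expansions of $\left(\begin{smallmatrix}\lambda I_n & A \\ B & I_s\end{smallmatrix}\right)$), applied with $A=R_0$ and $B=L\,J_v(x^*)$: then $AB=R_0L\,J_v(x^*)=N\,J_v(x^*)=J_f(x^*)$ and $BA=L\,J_v(x^*)R_0$, giving exactly the claimed factorization of $p_{J_f}$ (positivity of $x^*$ is not actually used). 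For (iv), the constant term of the $s\times s$ polynomial $p_{L J_v R_0}$ equals $(-1)^s\det(L\,J_v(x^*)R_0)$, and by (iii) this is the coefficient of $\lambda^{n-s}$ in $p_{J_f}$; hence the whole statement reduces to the single pointwise identity $\det(L\,J_v(x^*)R_0)=\det(J_{F_T}(x^*))$.

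\textbf{The key identity for (iv).} To prove $\det(L\,J_v(x^*)R_0)=\det(J_{F_T}(x^*))$ I would use that $W$ is row-reduced. Let $P=\{i_1,\dots,i_d\}$ be the pivot columns and $Q=\{1,\dots,n\}\setminus P$ its complement, so $|Q|=s$ and $S=\ker W$. Reorder the rows and columns of $J_{F_T}(x^*)$ simultaneously so that the indices in $P$ precede those in $Q$; this leaves the determinant unchanged, since the permutation sign enters squared. In these coordinates $W_{:,P}=I_d$, so the pivot rows of $J_{F_T}$ form the block $[\,I_d\mid W_{:,Q}\,]$ and the non-pivot rows form $[\,(NJ_v)_{Q,P}\mid (NJ_v)_{Q,Q}\,]$; a Schur-complement expansion against the block $I_d$ yields $\det(J_{F_T}(x^*))=\det\big((NJ_v)_{Q,Q}-(NJ_v)_{Q,P}\,W_{:,Q}\big)$. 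Finally, since by (ii) the left-hand determinant is independent of $R_0$, I am free to take the adapted basis $R_0=\left(\begin{smallmatrix}-W_{:,Q}\\ I_s\end{smallmatrix}\right)$ of $\ker W$ (rows ordered $P$ then $Q$) with left inverse $R_1=[\,0\mid I_s\,]$; then $L\,J_v(x^*)R_0=R_1J_f(x^*)R_0$ computes, via $J_f=N J_v$ and the block multiplication, to exactly $(NJ_v)_{Q,Q}-(NJ_v)_{Q,P}\,W_{:,Q}$, which matches the Schur expression and closes the argument.

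\textbf{Main obstacle.} Items (i)--(iii) are routine substitutions plus one citable identity, so the only real difficulty is item (iv): one must correctly set up the pivot/non-pivot block decomposition induced by the row-reduced $W$, check that the simultaneous row/column reordering preserves the determinant, and --- the crucial point --- choose the basis $R_0$ compatibly with $W$ so that the two sides coincide verbatim, using (ii) to transfer the conclusion to an arbitrary $R_0$. If one instead tried to prove the identity for a generic $R_0$ directly, the bookkeeping becomes unwieldy, so the leverage provided by (ii) is what makes the computation clean.
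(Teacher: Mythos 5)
Your proposal is correct, and for two of the four items it genuinely diverges from the paper. Items (i) and (ii) coincide with the paper's argument verbatim (substitution plus cancellation of $R_1R_0=I_s$, and the change-of-basis matrix giving $L'=M^{-1}L$). For (iii) the paper instead extends $R_0$ to an invertible $R=[R_0\mid R_0']$, writes $R^{-1}=\left(\begin{smallmatrix}R_1\\ R_1'\end{smallmatrix}\right)$, and uses $\im(J_f(x^*))\subseteq S$ to conclude $R_1'J_f(x^*)=0$, so that $R^{-1}J_f(x^*)R$ is block triangular with diagonal blocks $R_1J_f(x^*)R_0$ and $0_{d\times d}$; your route via $\det(\lambda I_n-AB)=\lambda^{n-s}\det(\lambda I_s-BA)$ applied to $A=R_0$, $B=LJ_v(x^*)$ reaches the same factorization more compactly, at the cost of citing (or proving) that identity, and you are right that positivity of $x^*$ plays no role in either version. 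The largest difference is (iv): the paper does not prove it at all but cites Proposition 5.3 of Wiuf--Feliu, whereas you supply a complete self-contained argument --- simultaneous row/column reordering into pivot and non-pivot blocks, the Schur complement against the identity block $W_{:,P}=I_d$, and the adapted basis $R_0=\left(\begin{smallmatrix}-W_{:,Q}\\ I_s\end{smallmatrix}\right)$ with $R_1=[\,0\mid I_s\,]$, transferred to arbitrary $R_0$ via (ii). This computation is correct (it implicitly uses that ``row reduced'' means the pivot columns of $W$ form the identity, which is the paper's convention and holds in all its examples) and is the one place where your write-up proves more than the paper does; what the paper's citation buys is brevity, while your version buys a proof that is verifiable without consulting the reference.
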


\begin{proof}
\textit{(i)} Since the columns of $N$ belong to $S$, we can uniquely write $N=R_0L$ with $L\in \R^{s\times m}$. 
Given that $J_f(x)=NJ_v(x)$, we have 
 \[ R_1J_f(x^*)R_0 =   R_1NJ_v(x^*) R_0= R_1R_0L J_v(x^*) R_0=L J_v(x^*) R_0. \] 				
\textit{(ii)} Let $M\in \R^{s\times s}$ be the matrix of change of basis from $R_0$ to $R_0'$ such that $R_0M=R_0'$.
From $N=R_0'L'=R_0L$, it follows that $R_0ML'=R_0L$ and thus $L'=M^{-1}L$. This gives
\begin{equation*}
L' J_v(x^*) R'_0 = M^{-1} LJ_v(x^*)R_0M,
\end{equation*}
which implies that $L'J_v(x^*) {R'}_0$ and $LJ_v(x^*)R_0$ are similar.
\newline 
\textit{(iii)} Extend the matrix $R_0$ to a square matrix $R\in \R^{n\times n}$ by adding columns such that $R$ has full rank $n$.
Then the eigenvalues of the matrices $Q=R^{-1}J_f(x^*)R$ and $J_f(x^*)$ coincide.
We choose  $R_1$ as the first $s$ rows of $R^{-1}=\begin{pmatrix}
R_1 \\ R_1'
\end{pmatrix} $. Then $R_1R_0=I_{s\times s}$ and $R_1'R_0=0$. Since  $\im(J_f(x^*))\subseteq S$, 
 the column span of $J_f(x^*)$ is contained in the column span of $R_0$, and thus $R_1'J_f(x^*) =0$.
Then, the matrix $Q$ has the form
\[ Q= \begin{pmatrix}
R_1 \\ R_1'
\end{pmatrix} J_f (x^*)  \begin{pmatrix}
R_0 &  R_0'
\end{pmatrix}  =  \begin{pmatrix}
R_1 J_f(x^*) \\ 0
\end{pmatrix}   \begin{pmatrix}
R_0 &  R_0'
\end{pmatrix}   = \left(  \begin{array}{cc}
			R_1 J_f(x^*)R_0  & R_1 J_f(x^*) R_0' \\
			0 & 0 
		\end{array}\right).
		\]
Clearly, the characteristic polynomial $p_Q(\lambda)$ is equal to $$\lambda^{n-s}p_{R_1 J_f(x^*)R_0}(\lambda).$$ Using {\small $R_1 J_f(x^*)R_0= L J_v(x^*)R_0$}, this concludes the proof of (iii). 
\newline
\textit{(iv)} The statement was proven in  \cite{wiuf-feliu}, Proposition 5.3.
\end{proof}

\subsection{Proof of Proposition \ref{prop:reduce_polynomial}}

We now turn into the proof of Proposition~\ref{prop:reduce_polynomial}. 
		In order to prove the identity in the statement,  we make first three observations that rely on the definition of $\varphi$ and on the chain rule for multivariate functions:
		\begin{enumerate}
			\item[(1)] By hypothesis, $\varphi(z)=\left(\varphi_1(z),\ldots ,\varphi_n(z)\right)$. Therefore, $\varphi'(z)=\left(\varphi_1'(z),\ldots ,\varphi_n'(z)\right)$.
			\item[(2)] Since $F_{T,\ell}(\varphi(z))=0$ for all $\ell\neq j$, we have that $(F_T\circ\varphi)(z)$ is a vector with zeros in every entry except for the $j$-th entry, which is equal to $(F_{T,j} \circ\varphi)(z) $. This implies that $(F_T\circ\varphi)'(z)$ is also a vector with zero in every entry except in the $j$-th, that is equal to $(F_{T,j} \circ\varphi)'(z)$. 
			\item[(3)] By the chain rule $(F_T\circ\varphi)'(z)=J_{F_T}(\varphi(z)) \varphi '(z)$. 
			
		\end{enumerate}
		From observations (2) and (3) we have 
\begin{equation}\label{eq:id}
J_{F_T}(\varphi(z)) \varphi '(z)=\left( 0,\ldots ,0,(F_{T,j}\circ \varphi)'(z),0,\ldots ,0\right)^{tr},
\end{equation}
		which means that the linear combination of the columns of $J_{F_T}(\varphi(z))$ given by the entries of $\varphi'(z)$ is equal to the vector on the right side of the equation. Now, using observation (1), we compute $\det (J_{F_T}(x^*))$. Indeed, denoting by $J_{F_T}^\ell$ the $\ell$-th column of $J_{F_T}$,   Eq.~\eqref{eq:id} gives
		\[\varphi_i'(z)J_{F_T}^i(\varphi(z))= \left( 0,\ldots ,0,(F_{T,j}\circ \varphi)'(z),0,\ldots ,0\right)^{tr} -\sum_{k=1, k\neq i}^n \varphi_k'(z)J_{F_T}^k(\varphi(z)). \]
Let 
$\widehat{J}_{F_T}(x^*)$ be the matrix obtained by replacing the $i$-th column of $J_{F_T}(\varphi(z))$ by the vector
\[\left( 0,\ldots ,0,\frac{(F_{T,j}\circ \varphi)'(z)}{\varphi_i '(z)},0,\ldots ,0\right)^{tr}.\]
Then,  $\det (J_{F_T}(\varphi(z)))$ and $\det \big(\widehat{J}_{F_T}(\varphi(z)) \big)$ agree. 
Now, expanding the determinant of $\widehat{J}_{F_T}(\varphi(z)) $ along the $i$-th column gives
		\[\det (J_{F_T}(x^*))=\frac{(-1)^{i+j}}{\varphi_i '(z)}(F_{T,j}\circ \varphi)'(z) \det(J_{F_T}(x^*)_{J,I}).\]
This concludes the proof. 
\begin{flushright}
\qed
\end{flushright}

\subsection{Proof of Theorem~\ref{thm:bistability} and a lemma}
We start by proving Theorem~\ref{thm:bistability}. 
By the second hypothesis, the first $s-1$ Hurwitz determinants are positive, so the stability only depends on the sign of the last Hurwitz determinant, which as discussed before, agrees with the sign of $(-1)^s\det(J_{F_T}(\varphi(z)))$. According to Proposition~\ref{prop:reduce_polynomial}, it further  coincides with the sign of $\frac{(-1)^{s+i+j}}{\varphi_i'(z)}(F_{T,j}\circ \varphi)'(z)\det(J_{F_T}(\varphi(z))_{J,I})$. Since $\det(J_{F_T}(\varphi(z))_{J,I})$ has a constant sign for every $z\in \mathcal{E}$, the sign of the  last Hurwitz determinant changes when the sign of $(F_{T,j}\circ \varphi)'(z)$ does, and this is the derivative of a univariate differentiable  function whose real positive roots $z_1<\dots < z_\ell$ have multiplicity one and are ordered in an increasing way. Given that $(F_{T,j}\circ \varphi)$ is differentiable, the sign of its derivative evaluated at consecutive roots alternates, that is $(F_{T,j}\circ \varphi)(z_1)'>0,(F_{T,j}\circ \varphi)(z_3)'>0,\ldots$ and $(F_{T,j}\circ \varphi)(z_2)'<0,(F_{T,j}\circ \varphi)(z_4)'<0,\ldots $ or the other way around. 

In our setting this means that, once the sign of $(F_{T,j}\circ \varphi)'(z_k)$ is multiplied by $(-1)^{s+i+j}$ and by the sign of $\frac{1}{\varphi_i'(z)}\det(J_{F_T}(\varphi(z))_{J,I})$, either $\varphi(z_1),\varphi(z_3),\dots$ are  exponentially  stable and $\varphi(z_2),\varphi(z_4),\dots$ are unstable, or the other way around. In particular, if the sign of $(F_{T,j}\circ \varphi)(z_1)'$ times the sign of  $(-1)^{s+i+j} \frac{1}{\varphi_i'(z)}\det(J_{F_T}(\varphi(z))_{J,I})$ is positive, then $\varphi(z_1)$ is exponentially stable. 
\begin{flushright}
\qed
\end{flushright}

We include a simple lemma to ensure that, in the setting of Theorem~\ref{thm:bistability}, only the numerator of $F_{T,j}\circ \varphi$ needs to be considered.

\begin{Lemma}\label{lemma:rationalfunction}
	Under the assumptions of Theorem \ref{thm:bistability}, assume $( F_{T,j} \circ \varphi)(z)=\frac{a(z)}{b(z)}$ is a rational function in $z$ such that  $b(z)$ is positive in $\mathcal{E}$. Then the zeros of $( F_{T,j} \circ \varphi)(z)$  agree with the roots of $a(z)$ and the sign of  $( F_{T,j} \circ \varphi)'(z^*) $, and $a'(z^*)$ agree for all $z^*\in \mathcal{E}$ such that $a(z^*)=0$.
\end{Lemma}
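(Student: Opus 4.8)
The plan is to reduce both claims to elementary facts about a quotient of functions, using only the positivity of the denominator $b$ on $\mathcal{E}$. For the first claim, I would argue that since $b(z)>0$ for every $z\in\mathcal{E}$, the value $(F_{T,j}\circ\varphi)(z)=a(z)/b(z)$ vanishes precisely when its numerator does; hence the zero set of $F_{T,j}\circ\varphi$ in $\mathcal{E}$ coincides with the set of roots of $a$ lying in $\mathcal{E}$. No further analysis is needed for this part.

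For the second claim, the key step is to differentiate $a/b$ via the quotient rule and then exploit the hypothesis $a(z^*)=0$. Concretely, I would write
\[
(F_{T,j}\circ\varphi)'(z)=\frac{a'(z)\,b(z)-a(z)\,b'(z)}{b(z)^2},
\]
and evaluate at a root $z^*$ of $a$. Because $a(z^*)=0$, the second term in the numerator vanishes, leaving
\[
(F_{T,j}\circ\varphi)'(z^*)=\frac{a'(z^*)}{b(z^*)}.
\]
Since $b(z^*)>0$, the factor $1/b(z^*)$ is a positive scalar, so the sign of $(F_{T,j}\circ\varphi)'(z^*)$ coincides with the sign of $a'(z^*)$, which is exactly the asserted agreement.

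There is essentially no obstacle here: the statement is a routine consequence of the quotient rule combined with the sign constraint on $b$. The only point deserving a word of care is implicit regularity, namely that $a$ and $b$ are differentiable at $z^*$ and that $b$ does not vanish there; but both follow from the standing assumption that $F_{T,j}\circ\varphi$ is a differentiable rational function with denominator positive on $\mathcal{E}$, so the quotient rule applies verbatim and the computation above is justified.
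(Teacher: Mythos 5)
Your proposal is correct and follows exactly the paper's own argument: positivity of $b$ on $\mathcal{E}$ gives the identification of zero sets, and the quotient rule evaluated at a root $z^*$ of $a$ yields $(F_{T,j}\circ\varphi)'(z^*)=a'(z^*)/b(z^*)$, from which the sign agreement is immediate. Nothing is missing.
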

\begin{proof}
The first part is straightforward, since the denominator of $( F_{T,j} \circ \varphi)(z)$ does not vanish in $\mathcal{E}$. 
For the second part, we have  
$F_{T,j} ( \varphi(z^*) )'=\frac{a'(z^*)b(z^*)-a(z^*)b'(z^*)}{b(z^*)^2} =\frac{a'(z^*)}{b(z^*)}$, and the conclusion follows from the fact that $b(z^*)>0$.
\end{proof}

\section{Computational challenges}\label{sec:comp_challenges}

In our context, the Hurwitz determinants are symbolic and depend on $\k$ and $x$ or $\xi$. 
Their computation requires the storage of functions with many terms, which easily becomes unfeasible in a regular PC. For example, for network (6) in Fig.~\ref{Figure 2}, $H_4$ and $H_5$ are polynomials in $\k$ and $x$ with respectively $1{,}732{,}192$ and $37{,}609{,}352$ monomials, before the evaluation at a parametrization.

For medium sized networks, some tricks can be applied under mass-action.  A first strategy is to parametrize the positive steady state variety using \emph{convex parameters} introduced by Clarke \cite{Clarke:1980tz,errami-hopf}. This conversion may reduce the number of parameters, mainly if the network has few reversible reactions. 

\smallskip
The second strategy corresponds to encode a monomial  $\eta\, x_1^{\alpha_1}\cdots x_n^{\alpha_n}$ as an $(n+1)$-tuple $(\eta,\alpha_1,\dots,\alpha_n)$, where $\eta$ is a rational function in $\k$.
We exploit relations among the $H_i$ obtained by expanding recursively along the last column of the submatrix of $H$ giving rise to $H_i$. For example, we have a  relation 
\[H_3= a_{s-3} H_2 - a_{s-1}(a_{s-4} H_1 + a_s a_{s-5})\] that holds for a generic polynomial. 
Assume $H_i$ is written as a sum of terms $H_{i,1}, \dots, H_{i,\ell}$ that can be computed.  Normally these terms arise from the coefficients of the characteristic polynomial and Hurwitz determinants $H_j$ with $j<i$, like the terms $a_{s-3} H_2$, and $a_{s-1}(a_{s-4} H_1 + a_s a_{s-5})$ for $H_3$. We identify the monomials of each term with their corresponding $(n+1)$-tuple and gather them into a list $L_1$.  Then, we create a list $L_2$ of the tuples of $L_1$ for which $\eta$ might not be a positive function of $\k$.
Note that there might be repeated monomials in $H_{i,1}, \dots, H_{i,\ell}$, whose coefficients we need to  add. To this end, for each element $(\eta^*,\alpha_1^*,\dots,\alpha_n^*)\in L_2$, we define the set
\[\bar{L}_{(\eta^*,\alpha_1^*,\ldots,\alpha_n^*)}\coloneqq. \big\{(\eta,\alpha_1,\dots,\alpha_n)\in L_1\, |\, \alpha_1=\alpha_1^*,\dots,\alpha_n=\alpha_n^* \big\}.\]
Then \[\bar{\eta}=\Sigma_{(\eta,\alpha_1,\dots,\alpha_n) \in \bar{L}_{(\eta^*,\alpha_1^*,\dots,\alpha_n^*)}} \ \eta \]
is the coefficient of $x_1^{\alpha^*_1}\cdots x_n^{\alpha^*_n}$ in $H_i$, and we inspect its sign. With this process we consider only the coefficients that might generate negative values of $H_i$ as they might be negative in some $H_{i,j}$. If all coefficients $\bar{\eta}$ are nonnegative, then so is $H_i$. 
This procedure requires substantially less memory, but it might take time as lists can be  long. With this strategy we have  determined the sign of $H_5$ in networks (5) and (6) in Fig.~\ref{Figure 2}. 

\smallskip
The third strategy applies to a special situation, namely when there exists a \emph{monomial positive parametrization}
\[\phi(\xi)_i = \beta_i\,  \xi_1^{b_{1i}}\cdots \xi_d^{b_{di}},\qquad i=1,\dots,n,\] with $\beta$ depending on the  rate constants \cite{PerezMillan}. 
Assume $H_i$ can be computed for $x\in \R^n_{>0}$, but evaluating at a parametrization and expanding the resulting polynomial to inspect its sign encounters memory issues. 
Let $B=(b_{ij}) \in \mathbb{Z}^{d\times n}$ be the matrix arising from the exponents of the monomial parametrization, and write
$(\xi^B)_j = \xi_1^{b_{1j}}\cdots \xi_d^{b_{dj}}$.

After evaluation at $\phi(\xi)$, a monomial $\eta\,  x_1^{\alpha_1}\cdots x_n^{\alpha_n}$ of $H_i$
becomes 
\[\eta \, \big(\beta_1\, \xi_1^{b_{11}}\cdots \xi_d^{b_{d1}}\big)^{\alpha_1}\cdots \big(\beta_n \, \xi_1^{b_{1n}}\cdots \xi_d^{b_{dn}}\big)^{\alpha_n}     = \eta\,  \beta_1^{\alpha_1} \cdots \beta_n^{\alpha_n}  \xi^{B\alpha}.\] 
Hence, to compute $H_i(\phi(\xi))$ we first write each monomial of $H_i(x)$ as an $(n+d+1)$-tuple given by $(\eta,\alpha_1,\ldots ,\alpha_n,0,\dots ,0)$. Then, we record the evaluation of each monomial with the $(n+d+1)$-tuple \[(\eta, \alpha_1,\dots,\alpha_n,(B\alpha)_1,\dots,(B\alpha)_d).\] 
This generates a list $L_1$ of $(n+d+1)$-tuples.  We define $L_2$ as the set of tuples of $L_1$ for which $\eta$ might not be a positive function of $\k$. Then, similar to the procedure explained above, for each $t=(\eta^*,\alpha_1^*,\dots,\alpha_n^*,(B\alpha)_1^*,\dots,(B\alpha)_d^*)\in L_2$, we consider all tuples that give rise to the same monomial in $\xi$, namely, we  define 
{\small
\[\bar{L}_{t}\coloneqq \big\{(\eta,\alpha_1,\ldots,\alpha_n,(B\alpha)_1,\ldots,(B\alpha)_d)\in L_1 \, | \, B\alpha=B\alpha^* \big\}.\] }
We compute now the coefficient of $\xi^{B\alpha}$, which is
\[ \bar{\eta}=\Sigma_{(\eta,\alpha_1,\ldots,\alpha_n,(B\alpha)_1,\ldots,(B\alpha)_d) \in \bar{L}_{t}} \ \eta\, \beta_1^{\alpha_1}\cdots \beta_n^{\alpha_n}. \]
We determine its sign to decide whether $H_i$ is positive.

With this approach, we computed the  Hurwitz determinants of networks (a), (b), (d) and (e) 
in Fig.~\ref{Figure 3}. However, $H_{s-1}$ is not positive and Theorem~\ref{thm:bistability} does not apply.

 \smallskip
To verify that a polynomial with both positive and negative coefficients attains both signs, one can study the associated Newton polytope, as employed in the context of reaction networks in \cite{FeliuPlos,Shiu-Hopf} to cite a few.
To assert that a polynomial only attains positive values despite having negative coefficients, one can employ techniques from  sum-of-squares \cite{Blekherman:SOS}  and polynomial optimization via sums of nonnegative circuit polynomials \cite{Dressler:Nullstellensatz,Iliman:Amoebas,feliu:2site}. Another approach, building on similar ideas but with a specific focus on reaction networks, was introduced in \cite{pantea-jac}.
However, the  size of the polynomials we encounter make these approaches  challenging.

\section*{Discussion}

 All the steps of our approach to determine the stability are symbolic, and therefore provide computer-assisted proofs for bistability. 
In the most favorable scenario where Theorem~\ref{thm:bistability} applies, the number of unstable and exponentially stable steady states is completely determined, and question (3) in the Introduction is answered.
In particular, if the reduced univariate equation has at least three solutions and the first steady state is exponentially stable,   the parameter region of bistability agrees with the parameter region giving three positive steady states. Finding the latter poses a simpler (though still hard) challenge, which can be (partially) addressed using recent methods \cite{FeliuPlos,dickenstein:regions}.

Under mass-action kinetics, reduction of the steady state equations to one polynomial  can in principle be achieved  using Gr\"obner bases and invoking the Shape Lemma \cite{cox:little:shea}. However, positivity is not addressed, and the interval $\mathcal{E}$ in Proposition~\ref{prop:reduce_polynomial} is rarely explicit. Reduction to one polynomial arises often after exploiting the inherent linearity the equations   have \cite{Fel_elim}. 

In our procedure, the Hurwitz criterion can be replaced by other criteria of algebraic nature, namely the Liénard-Chipart criterion in~\cite{Datta} or checking whether the matrix $Q_{x^*}$ is both a P-matrix and sign-symmetric. However, these criteria can only be used to assert exponential stability (see Appendix~\ref{app:criteria}, where these criteria are applied to the network in Eq.~\eqref{R.Ex}).

We have illustrated with numerous realistic examples that our approach determines bistability   after performing network reduction. To our knowledge, this is a new result for all networks in Fig.~\ref{Figure 3} but  network (d). For this one,  bistability was formally proven in \cite{rendall-2site} using methods from geometric singular perturbation theory and the accurate study of a reduced network. We see our approach as a big step towards the automatic detection of bistability in open parameter regions, which relies on purely algebraic manipulations instead of advanced analytic arguments.
Although the approach is applicable to arbitrary ODE systems, the special structure of the systems arising from reaction networks, specifically linearity, the existence of conservation laws and reduction techniques, make the approach particularly  suited for this scenario.

\medskip
\paragraph{{\bf Acknowledgments. }}
The authors acknowledge funding from the Independent Research Fund of Denmark. We thank L. Brustenga, A. Dickenstein, B. Pascual Escudero, A. Sadeghimanesh, and C. Wiuf for useful comments in earlier versions of this manuscript.


\appendix
\section*{Appendices}
In Appendix~\ref{app:criteria} we expand on other stability criteria that could be used in our procedure instead of the Hurwitz criterion, and in Appendix~\ref{app:examples} we provide details of the analysis of the networks in Figures~\ref{Figure 2} and \ref{Figure 3}.

\section{Other algebraic criteria for stability}\label{app:criteria}
In this appendix we discuss two additional criteria to decide the stability of the steady states. 
Similarly to the Hurwitz criterion, the Liénard-Chipart criterion in~\cite{Datta}, determines whether all the roots of a polynomial have negative real part, and requires a smaller amount of computations than the Hurwitz criterion. Before introducing the criterion, we need some ingredients. 

\begin{definition}
	\begin{itemize}
		\item The \textit{Bezout matrix} of two polynomials $h(x)=h_nx^n+h_{n-1}x^{n-1}+\cdots +h_{1}x+h_0$ and $g(x)=g_mx^m+g_{m-1}x^{m-1}+\cdots +g_{1}x+g_0$ with $n\geq m$, denoted by $B_{h,g}$, is defined as the representation matrix of the bilinear form 
\[B(h,g;x,y)=\frac{h(x)g(y)-h(y)g(x)}{x-y}=\sum_{i,j=0}^{n-1}b_{ik}x^iy^k,\]
that is, $B_{h,g}\coloneqq (b_{ik})$. This is a symmetric matrix of size $n\times n$.
		
		\item A square matrix $A\in \R^{n\times n}$ is called a P-matrix if all its principals minors are positive, that is, $\det(A_{I,I})>0$ for every subset $I\subseteq \{1,\ldots ,n\}$. If $A$ is symmetric, this is equivalent to $A$ being positive definite. 
	\end{itemize}
\end{definition}
We now present the first additional stability criterion.
\begin{criterion}[Liénard-Chipart]
	All the roots of a polynomial $p(x)=x^s+p_{s-1}x^{s-1}+\ldots +p_{1}x+p_0$ with $p_i\in \mathbb{R}$ and $p_0\neq 0$ have negative real part if and only if, after writing $p(x)=h(x^2)+xg(x^2)$, the Bezout matrix  $B_{h,g}$ of $h$ and $g$ is positive definite and $p_i>0$ for $i=1,\ldots ,s$.
\end{criterion}

In this criterion the polynomials $h$ and $g$ are associated with the even and odd parts of $p$ respectively. Note that the degrees of $h$ and $g$ are at most $\lfloor\frac{s}{2}\rfloor$, therefore the size of $B_{h,g}$ is $\lfloor \frac{s}{2} \rfloor$. Additionally, since $B_{h,g}$ is symmetric, $B_{h,g}$ is positive definite if and only if it is a P-matrix. 
 
Unlike the Hurwitz criterion, Liénard-Chipart does not give a result regarding instability. If $B_{h,g}$ is not a P-matrix, it is not possible to determine whether the eigenvalues that do not have negative real part, have positive or zero real part.

\medskip
In order to apply the criterion to Example~\ref{ex:HK}, we write $q_{x}(\lambda)$ as 
$h(\lambda^2)+\lambda g(\lambda^2)$, with
	\[h(\lambda)=\lambda+\k_1\k_2x_2+\k_2\k_3x_3+\k_1\k_3 \quad \mbox{ and } \quad g(\lambda)=\k_2x_2+\k_2x_3+\k_1+\k_3.\]
The Bezout matrix is then $B_{h,g}=(x_2+x_3)\k_2+\k_1+\k_3$, which is clearly a P-matrix. 
By the Liénard-Chipart criterion, we conclude that all the roots of $q_{x}$ have negative real part, recovering thereby that the only positive steady state in each stoichiometric compatibility class is exponentially stable.

The Liénard-Chipart criterion is the most efficient if we consider the amount of determinants that have to be computed to reach a decision. Given that the Bezout matrix is symmetric, to check that is positive definite, it is only necessary to compute the principal minors and check whether they are positive. Thus the required amount of determinants  is at most $\sum_{i=0}^{\lfloor\frac{n+1}{2}\rfloor} {\binom{\lfloor\frac{n+1}{2}\rfloor}{i}}$. Although this criteria computes the smallest amount of determinants, for some examples, the entries of the Bezout matrix are larger than the entries of the Jacobian. In those cases the memory of a regular PC is still not enough to store the computations.

\medskip
We conclude the list of stability criteria of algebraic nature with one more criterion, which does not rely on the computation of the characteristic polynomial. For a square matrix $A$, we say that 
$A$ is \textit{sign symmetric} if $\det(A_{I,J})\det (A_{J,I})\geq 0$ for every $I,J\subset \{1,\ldots ,n\}$ with the same cardinality.

\begin{criterion}[P-matrices that are sign symmetric]\label{cri:P}
	If a square matrix $A$ is both a P-matrix and sign symmetric, then all its eigenvalues have positive real part. 
\end{criterion}
With this criterion, proved in~\cite{Carlson}, if $-A$ is a P-matrix and sign-symmetric, then all its eigenvalues have negative real part. For reaction networks, we apply the criterion to the matrix $-Q_{x}$. In Example~\ref{ex:HK}, 
we compute the minors of size 1 and 2 of $A=-Q_{x}$. The minors of size $1$ are the entries of the matrix, which are all positive. The only minor of size 2 is \[ \det(A)=\k_1\k_2x_2+\k_2\k_3x_3+\k_1\k_3.\]
	All minors are polynomials that are positive for all $x\in\mathbb{R}^4_{>0}$ and positive $\k$. Therefore $-Q_{x}$ is a P-matrix and sign symmetric. Hence,  with this new criterion we recover the conclusion that the only positive steady state is exponentially stable.

\medskip
\medskip
While the Hurwitz and Liénard-Chipart criteria are applied to the characteristic polynomial of $L J_v(x^*)R_0$, which is independent of the choice of $R_0$, Criterion~\ref{cri:P} is applied directly to the matrix $-L J_v(x^*)R_0$. 
By Proposition~\ref{prop:basic_jacobian}, two different choices of $R_0$ give rise to two similar matrices. However,  the properties of being P-matrix and sign symmetric are not preserved on similar matrices. As a small example consider 	  	
	  	\[ A=\left(\begin{array}{cc}
	  		2 & 1 \\
	  		3 & 4
	  	\end{array}\right) \qquad \textrm{ and }\qquad B=\left(\begin{array}{cc}
	  		-1 & -4 \\
	  		3 & 7
	  	\end{array}\right); \] these matrices are similar through 
	  	{\small $M=\left(\begin{array}{cc}
	  		1 & 1 \\
	  		0 & 1
	  	\end{array}\right)$}, %
 but $A$ is both a P-matrix and sign symmetric and $B$ is neither. 

A comparison of the amount of operations required for each stability criterion shows that deciding whether a matrix is a P-matrix and sign-symmetric requires the largest amount of operations. In this case, all $\sum_{i=1}^s {\binom{s}{i}}^2$ minors of the matrix must be computed, which requires the storage of a large amount of information given that the entries of the Jacobian are typically polynomials. 

\section{Examples}\label{app:examples} 
In this section we provide extra details on the study of the networks in Figures~\ref{Figure 2} and \ref{Figure 3}. For more details on the computations, we refer the reader to the accompanying \texttt{Maple}  file.
Concentrations are denoted by corresponding  lower case letters: the concentration of species $X_i$ is denoted by $x_i$. 

\subsection{Two substrate enzyme catalysis (Fig.~\ref{Figure 1})}
We consider  the following network with mass-action kinetics
{\small
\begin{center}
\begin{tabular}{ccccc}
\ce{E + S1 <=>[\k_1][\k_2] ES1} & \ce{E + S2 <=>[\k_3][\k_4] ES2} & \ce{S2 + ES1 <=>[\k_5][\k_6] ES1S2}\\[0.2cm]
 & \ce{ES1S2 <=>[\k_7][\k_8] E + P} & \ce{S1 + ES2 <=>[\k_9][\k_{10}] ES1S2}.
\end{tabular}
\end{center}}
		
This network consists of an enzyme $E$ that binds two substrates, $S_1$ and $S_2$, in order to catalyze the reversible conversion to the product $P$. The binding is unordered. It was proven in~\cite{FeliuPlos} that this network has a unique steady state in each stoichiometric compatibility class for every set of reaction rate constants. We now prove that this steady state is exponentially stable. First, denote the species as $X_1$=\ce{E}, $X_2$=\ce{S1}, $X_3$=\ce{ES1}, $X_4$=\ce{S2}, $X_5$=\ce{ES2}, $X_6$=\ce{ES1S2} and $X_7$=\ce{P}.  The ODE system is
{\small \begin{align*}
\tfrac{dx_1}{dt}  &= -\k_1x_1x_2-\k_3x_1x_4-\k_{10}x_1x_7+\k_2x_3+\k_4x_5+\k_9x_6  & \tfrac{dx_5}{dt} &= \k_3x_1x_4-\k_8x_2x_5-\k_4x_5+\k_7x_6 \\
			\tfrac{dx_2}{dt} &= -\k_1x_1x_2-\k_8x_2x_5+\k_2x_3+\k_7x_6  & \tfrac{dx_6}{dt} &= \k_5x_3x_4+\k_8x_2x_5+\k_{10}x_1x_7 \\
			\tfrac{dx_3}{dt} &=  \k_1x_1x_2-\k_5x_3x_4-\k_2x_3+\k_6x_6 &  & \quad -\k_6x_6-\k_7x_6-\k_9x_6 \\
			\tfrac{dx_4}{dt} &=  -\k_3x_1x_4-\k_5x_3x_4+\k_4x_5+\k_6x_6 &   \tfrac{dx_7}{dt} &=-\k_{10}x_1x_7+\k_9x_6,
\end{align*}}
and the conservation laws  are
{\small
\[ x_1+x_3+x_5+x_6=T_1, \qquad x_2+x_3+x_6+x_7=T_2\qquad \mbox{ and } \qquad x_4+x_5+x_6+x_7=T_3. \]}
With this choice of conservation laws we have 
{\small \[ F_T(x)= \left( \begin{array}{c}
			 x_1+x_5+x_5+x_6-T_1\\
			 x_2+x_3+x_6+x_7-T_2\\
			 \k_1x_1x_2-\k_5x_3x_4-\k_2x_3+\k_6x_6  \\
			 x_4+x_5+x_6+x_7-T_3 \\
			 \k_3x_1x_4-\k_8x_2x_5-\k_4x_5+\k_7x_6\\
			 \k_5x_3x_4+\k_8x_2x_5+\k_{10}x_1x_7-\k_6x_6-\k_7x_6-\k_9x_6 \\
			  -\k_{10}x_1x_7+\k_9x_6
\end{array} \right). \]}%
Here $s=4$.
We compute $q_x$ and the Hurwitz determinants in {\tt Maple}, and find that all but the last have all coefficients positive, and thus are positive. We find next a positive parametrization by solving the steady state equations in the variables $x_3,x_5$, $x_6$, $x_7$ following the methods proposed in \cite{Fel_elim,FeliuPlos}: 
	{\footnotesize 	\begin{align*}
x_3 & =\frac{ x_1x_2(\k_6\k_8(\k_1x_2+\k_3x_4)+\k_1\k_4(\k_6+\k_7))}{\k_2\k_6\k_8x_2+\k_4\k_5\k_7x_4+\k_2\k_4\k_6+\k_2\k_4\k_7}	&
x_5 & = \frac{x_1x_4( \k_5\k_7(\k_1x_2+\k_3x_4)+\k_2\k_3(\k_6+\k_7))}{\k_2\k_6\k_8x_2+\k_4\k_5\k_7x_4+\k_2\k_4\k_6+\k_2\k_4\k_7} \\
x_6 & = \frac{x_1x_2x_4(\k_5\k_8(\k_1x_2+\k_3x_4)+\k_1\k_4\k_5+\k_2\k_3\k_8)}{\k_2\k_6\k_8x_2+\k_4\k_5\k_7x_4+\k_2\k_4\k_6+\k_2\k_4\k_7}  &
x_7& = \frac{x_2x_4\k_9( \k_5\k_8(\k_1x_2+\k_3x_4)+\k_1\k_4\k_5+\k_2\k_3\k_8)}{(\k_2\k_6\k_8x_2+\k_4\k_5\k_7x_4+\k_2\k_4\k_6+\k_2\k_4\k_7)\k_{10}}.
\end{align*}}%

After evaluation of the independent term of $q_x$ at the parametrization, $H_4$ becomes positive. 
We conclude that for any choice of reaction rate constants, the network for two substrate enzyme catalysis has exactly one positive steady state in each stoichiometric compatibility class, which is exponentially stable.

\subsection{Gene transcription network (Fig.~\ref{Figure 1})}

We consider  the following  network:
\begin{center}
\begin{tabular}{cccc}
\ce{X1 ->[\k_1] X1 + P1} & \ce{X2 ->[\k_2] X2 + P2} & \ce{P1 ->[\k_3] 0} & \ce{P2 ->[\k_4] 0} \\[8pt]
\ce{X2 + P1 <=>[\k_5][\k_6] X2P1} & \ce{2P2 <=>[\k_7][\k_8] P2P2} &
\ce{X1 + P2P2 <=>[\k_9][\k_{10}] X1P2P2}.
\end{tabular}
\end{center}
We denote the species as $X_1$ =\ce{ X_1}, $X_2$ =\ce{X2}, $X_3$=\ce{ P_1}, $X_4 $= \ce{P2}, $X_5$ = \ce{X_2P_1}, $X_6$ = \ce{P_2P_2}, and $X_7 = X_1P_2P_2$. Additionally, we are under the assumption of mass-action kinetics. It was proven in~\cite{FeliuPlos} that for each set of positive reaction rate constants there is a stoichiometric compatibility class that contains at least two positive steady states.
The ODE system is
	{\small \begin{align*}
\tfrac{dx_1}{dt} &= -\k_{9}x_{1}x_{6}+\k_{10}x_{7}  & \tfrac{dx_2}{dt} & =-\k_{5}x_{2}x_{3}+\k_{6}x_{5} \\
\tfrac{dx_3}{dt} &=-\k_{5}x_{2}x_{3}+\k_{1}x_{1}-\k_{3}x_{3}+\k_{6}x_{5} & \tfrac{dx_4}{dt} &=  -2\,\k_{7}x_{4}^{2}+\k_{2}x_{2}-\k_{4}x_{4}+2\,\k_{8}x_{6}\\ 
\tfrac{dx_5}{dt} &= \k_{5}x_{2}x_{3}-\k_{6}x_{5} & \tfrac{dx_6}{dt} &= \k_{7}x_{4}^{2}-\k_{9}x_{1}x_{6}-\k_{8}x_{6}+\k_{10}x_{7}\\ 
\tfrac{dx_7}{dt} &= \k_{9}x_{1}x_{6}-\k_{10}x_{7},
		\end{align*}}%
and the conservation laws are  $x_1+x_7=T_1$ and $x_2+x_5=T_2$.
These give rise to the function $F_T(x)$:
{\small \[ F_T(x)= \left( \begin{array}{c}
	x_1+x_7-T_1\\
	x_2+x_5-T_2\\
	-\k_5x_2x_3+\k_1x_1-\k_3x_3+\k_6x_5  \\
	-2\k_7x_4^2+\k_2x_2-\k_4x_4+2\k_8x_6\\
	 \k_5x_2x_3-\k_6x_5 \\
	 \k_7x_4^2-\k_9x_1x_6-\k_8x_6+\k_{10}x_7 \\
	 \k_9x_1x_6-\k_{10}x_7
\end{array} \right). \]}%

Here $s=5$. We find $q_x$ and compute the 5 Hurwitz determinants for $x\in \R^7_{>0}$. 
We find that $H_3,H_4,H_5$ have coefficients of both signs. We proceed to find a parametrization by solving the steady state equations in $x_3,\dots,x_7$, which gives:
{\small \[ x_{{3}}={\frac {\k_{{1}}x_{{1}}}{\k_{{3}}}},\quad x_{{4}}={\frac {\k_{{2}}x_{{2}}}{\k_{{4}}}},
\quad x_{{5}}={\frac {\k_{{1}}\k_{{5}}x_{{1}}x_{{2}}}{\k_{{3}}\k_{{6}}}},
\quad x_{{6}}={\frac {\k_2^{2}\k_{{7}}x_2^2}{\k_4^{2}\k_{{8}}}},
\quad x_{{7}}=\frac {\k_ 2^{2}\k_7\k_{{9}}x_{{1}}x_2^{2}}{\k_4^{2}\k_8\k_{10}}.\] }%
After evaluating $H_3,H_4$ and $H_5$ in this parametrization, $H_3$ and $H_4$ become rational functions in $x_1,x_2$ and $\k$ with all coefficients positive. Hence they are positive as well.

This means that the stability of the steady state is determined by the sign of $H_5$. We check whether we can apply Theorem~\ref{thm:bistability}. By solving $F_T(x)=0$ in $x_2,\dots,x_7$ after removal of the conservation law with total amount $T_1$, we obtain 
{\small \begin{align*}
x_{{2}} & ={\frac {\k_{{3}}\k_{{6}}T_{{2}}}{\k_{{1}}\k_{{5}}x_{{1}}+
\k_{{3}}\k_{{6}}}}, & x_{{3}}& ={\frac {\k_{{1}}x_{{1}}}{\k_{{3}}}}, & x_{{4}}& ={\frac {\k_{2}\k_{3}\k_{6}T_{2}}{ \left( \k_{{1}}\k_{{5}}x_{{1}}+\k_{{3}}\k_{{6}} \right) \k_{{4}}}},
\\ x_{{5}} & ={\frac {\k_{{1}}\k_{{5}}x_{{1}}T_{{2}}}{\k_{{1}}\k_{{5}}x_{{1}}+\k_{{3}}\k_{{6}}}}, & 
x_{{6}}& ={\frac {\k_2^{2}\k_3^{2}\k_6^{2}\k_7T_2^{2}}{ ( \k_{{1}}\k_{{5}}x_{{1}}+\k_{{3}}\k_{{6}})^{2}\k_4^{2}\k_{{8}}}}, & 
x_{{7}}& ={\frac {\k_2^{2}\k_3^{2}\k_6^{2}\k_7\k_9x_1T_2^{2}}{ \left( \k_{{1}}\k_{{5}}x_{{1}}+\k_{{3}}\k_{{6}} \right) ^{2}\k_{{4}}^{2}\k_{{8}}\k_{{10}}}}. 
 \end{align*} }%
These expressions define $\varphi$, with $z=x_1$, $\varphi_1(z)=z$ and $\mathcal{E}=\R_{>0}$. By inserting these expressions into the conservation law with $T_1$, we conclude that the solutions of $F_T(x)=0$ are in one to one correspondence with the zeroes of the function 
{\small \begin{equation}
	\begin{split}
		(F_{T,1}\circ\varphi)(z) &=\frac{1}{(\k_1\k_5z+\k_3\k_6)^2\k_4^2\k_8\k_{10}} \left[  \k_1^2\k_4^2\k_5^2\k_8\k_{10}z^3+(-T_1\k_1\k_5 +2\k_3\k_6)\k_1\k_4^2\k_5\k_8\k_{10}z^2 +\right.  \\ 
		 (&T_2^2\k_2^2\k_3^2\k_6^2\k_7\k_9-2T_1\k_1\k_3\k_4^2\k_5\k_6\k_8\k_{10}+ \left. \k_3^2\k_4^2\k_6^2\k_8\k_{10})z-T_1\k_3^2\k_4^2\k_6^2\k_8\k_{10} \right].   \nonumber
	\end{split}
	\end{equation}}%
 The numerator of this function has degree three in $z$, so using Lemma~\ref{lemma:rationalfunction}, the maximum number of positive steady states in each stoichiometric compatibility class is three.
The first hypothesis in Theorem~\ref{thm:bistability} is satisfied, as
\[ \tfrac{1}{\varphi_i'(z)}\det(J_{F_T}(\varphi(z))_{J,I}) = \det(J_{F_T}(\varphi(z))_{\{2,\dots,7\},\{2,\dots,7\}}) = -( \k_1\k_5x_1+\k_3\k_6 ) \k_4\k_8\k_{10} <0.
  \]
By Theorem~\ref{thm:bistability}, the sign of \eqref{eq:importantsign} 
is positive since $s+i+j=7$ and the sign of  $(F_{T,1}\circ \varphi)'(z_1) $ is positive as the independent term of the numerator of $(F_{T,1}\circ \varphi)(z_1) $ is negative.
Therefore, the stability of the steady states alternate with $z$ starting with an exponentially stable steady state. Specifically, if a stoichiometric compatibility class has one positive steady state, then it is exponentially stable. If it has three positive steady states, then two of them are exponentially stable and the other one is unstable.	Bistability is guaranteed whenever the network has three positive steady states.

\subsection{Monostationary networks from Fig.~\ref{Figure 2}}

Networks (1) to (4) are straightforward to analyze, since all coefficients of the Hurwitz determinants  in $x$ and $\k$ are positive; hence the Hurwitz determinants are positive for all $x\in \R^n_{>0}$ and $\k\in \R^m_{>0}$.  

\medskip
For networks (5) and (6) in Fig.~\ref{Figure 2}, the computation was interrupted as it took long. In both networks $s=6$, 
and the first four Hurwitz determinants could be computed. 
These determinants are polynomials in $\k$ and $x$ with positive coefficients, thus they are positive for every positive steady state. In order to compute $H_5$, we followed the second strategy explained in Section~\ref{sec:comp_challenges}. We rely on an identity that holds for the Hurwitz determinants of a generic polynomial of degree 6. Namely, for a generic polynomial $h(t)=a_6t^6+a_5t^5+a_4t^4+a_3t^3+a_2t^2+a_1t+a_0$, the fifth Hurwitz determinant can be written in terms of the previous ones as follows \[H_5=a_1H_4+a_0 (-a_0a_5^3+a_1a_5H_2-a_3H_3).\] 
With this identity, we analyze the sign of the coefficients of $H_5$ by studying separately the coefficients of $A=a_0(-a_0a_5^3+a_1a_5H_2-a_3H_3)$ and $B=a_1H_4$ after substituting $a_i$ for the coefficient of $\lambda^i$ in $q_x(\lambda)$. Note that the coefficients of $B$ are positive because both $a_1$ and $H_4$ are polynomials with positive coefficients. 
We proceed to find the coefficients of $H_5$ by selecting the terms of $A$ that might be negative. We actually found that all coefficients of $A$ are polynomials in $\k$ with some negative coefficients. 
When matched with monomials in $B$, all negative coefficients cancel out, confirming that $H_5$ is positive for all positive $\k$ and $x$.
Therefore, $H_5$ is positive for every positive steady state. 
Knowing this about $H_5$, we also conclude that $H_6=a_0H_5$ is a polynomial with positive coefficients and the unique steady state in each stoichiometric compatibility class is exponentially stable for every set of parameters. 
Note that in these computations we do not need to evaluate at a positive parametrization, meaning all Hurwitz determinants are positive for arbitrary positive $\k$ and $x$. 

In networks (5) and (6), this strategy to compute $H_5$ meant that we were analyzing only 24,196 and 27,982 coefficients instead of 37,319 and 36,970 coefficients respectively.

\subsection{Multistationary networks from Fig.~\ref{Figure 3}}
We consider now the networks in Fig.~\ref{Figure 3}, which all are known to be multistationary.
We sketch here why the procedure fails for each network, and how it applies to the reduced network.  
To certify multistationarity for the reduced networks, we  will apply the method from \cite{FeliuPlos}, which consists of finding values for the rate constants and concentration variables such that $\det(J_{F_T}(\phi))$ is negative, where $\phi$ is a parametrization of the steady states.

\medskip
\noindent
\textbf{Network (a)}. This network is the combination of two one-site modification cycles where the same kinase \ce{E} activates the phosphorylation process and two different phosphates \ce{F1} and \ce{F2} catalyze the dephosphorylation process:
	\begin{align*}
		&\ce{S0 + E <=>[\k_1][\k_2] S0 E ->[\k_3]S1 +E} & 
		&\ce{S1 + E <=>[\k_4][\k_5] S1 E ->[\k_6] S2 + E} \\
		&\ce{S1 + F1 <=>[\k_7][\k_8] S1 F1 ->[\k_9] S0 + F1} &
		&\ce{S2 + F2 <=>[\k_{10}][\k_{11}] S2 F2 ->[\k_{12}] S1 + F2.} 
	\end{align*}		
The species are renamed as $S_0 = X_1, S_1 = X_2, S_2 = X_3, E = X_4, F_1 = X_5, F_2 = X_6, S_0E = X_7, S_1E = X_8, S_1F_1 = X_9, S_2F_2 = X_{10}$. 
 Since the polynomial $q_x$ has degree 6,  we need to compute 6 Hurwitz determinants. These determinants were computed and their signs were analyzed up to $H_4$, and they have positive coefficients. However, the analysis of the sign of $H_5$ was interrupted as it took long and it was not possible to store the polynomial in the expanded format in a regular PC. To compute and study this determinant more effectively, we use a monomial positive parametrization $\phi$ of the steady state variety, which, in the notation  given in
Section~\ref{sec:comp_challenges}, corresponds to 
{\small
\begin{align*}
	\beta= & \left( \frac{(\k_2+\k_3)(\k_5+\k_6)\k_7\k_9\k_{10}\k_{12}}{\k_1\k_3\k_4\k_6(\k_8+\k_9)(\k_{11}+\k_{12})}, \frac{(\k_5+\k_6)\k_{10}\k_{12}}{\k_4\k_6(\k_{11}+\k_{12})}\right.,1,1,1,1,
\frac{(\k_5+\k_6)\k_7\k_9\k_{10}\k_{12}}{\k_3\k_4\k_6(\k_8+\k_9)(\k_{11}+\k_{12})}, \\ &\qquad  \frac{\k_{10}\k_{12}}{\k_6(\k_{11}+\k_{12})},  \frac{(\k_5+\k_6)\k_7\k_{10}\k_{12}}{\k_4\k_6(\k_8+\k_9)(\k_{11}+\k_{12})},\left. \frac{ \k_{10}}{\k_{11}+\k_{12}}\right),
\end{align*}
\[B=\left[\begin{array}{cccccccccc}
	1&1&1&0&0&0&1&1&1&1 \\
	-2&-1&0&1&0&0&-1&0&-1&0\\
	1&0&0&0&1&0&1&0&1&0\\
	1&1&0&0&0&1&1&1&1&1
\end{array}\right]  \qquad \mbox{ \normalsize{ and} } \qquad \xi=(x_3,x_4,x_5,x_6).\] }%
Using   the identification of the monomials with tuples, it was possible to compute $H_5(\phi)$. However, the sign of this function remains unclear since it has coefficients with different signs. 

\smallskip
We consider next the reduced network obtained by first removing all the reverse reactions and then the intermediates \ce{S1 F1} and \ce{S2 F2}. When removing these intermediates the reactions \ce{S1 + F1 -> S1 F1 -> S0 + F1} and \ce{S2 + F2 -> S2 F2 -> S1 + F2 } become \ce{S1 + F1 -> S0 + F1} and \ce{S2 + F2 -> S1 + F2 } respectively. The reduced network is
\begin{align*}
	&\ce{S0 + E ->[\t_1] S0 E ->[\t_2]S1 +E} 
	&\ce{S1 + E ->[\t_3] S1 E ->[\t_4] S2 + E} \\
	&\ce{S1 + F1 ->[\t_5] S0 + F1} 
	&\ce{S2 + F2 ->[\t_6] S1 + F2}.
\end{align*}	
The species are now renamed as $S_0= X_1, S_1 = X_2, S_2 = X_3, E = X_4, F_1 = X_5, F_2 = X_6, S_0E = X_7, S_1E = X_8$. The polynomial $q_x$ associated with this network has degree 4 and, when computing the Hurwitz determinants we have that $H_1,H_2$ and $H_3$ are positive. However the sign of $H_4$ is unclear even after evaluating at a positive parametrization of the steady state variety $\phi$. In this situation we explore the possibility of applying Theorem~\ref{thm:bistability} to deduce bistability. 
The conservation laws of the system are 
\[x_1+x_2+x_3+x_7+x_8=T_1, \quad x_4+x_7+x_8=T_2, \quad x_5=T_3 \quad \mbox{ and } x_6=T_4.\]
Taking the indices $i_1,i_2,i_3,i_4$ as $1,4,5,6$ respectively, we construct $F_T$ as in Eq.~\eqref{F}. The solutions of $F_{T,\ell}=0$ for $\ell\neq 1$ are written in terms of $z=x_2$ as 
{\footnotesize \[\varphi (z) = \left(\frac{\t_2\t_5 T_3(\t_3z +\t_4)z}{\t_1\t_4(   \t_2 T_2 - \t_5 T_3 z)},z,\frac{\t_3\t_4z(   \t_2 T_2 - \t_5 T_3 z)}{\t_2\t_6 T_4 (\t_3z+\t_4)}, \frac{\t_4(   \t_2 T_2 - \t_5 T_3 z)}{\t_2(\t_3z+\t_4)}, T_3, T_4, \frac{\t_5 T_3z}{\t_2}, \frac{\t_3(   \t_2 T_2 - \t_5 T_3 z)z }{\t_2(\t_3 z +\t_4)}\right),\]}%
for  $z\in\mathcal{E}$, where $\mathcal{E}=\left(0,\frac{T_2\t_2}{T_3\t_5} \right)$. 

Note that $\varphi_2(z)=z$ and $\varphi_2'(z)=1\neq 0$. This means that  the positive steady states in the stoichiometric compatibility class defined by $T$ are in one to one correspondence with the positive roots of $F_{T,1}(\varphi(z))$ in $\mathcal{E}$. This rational function, presented below, has as numerator a polynomial of degree 3.
{\footnotesize 
\begin{equation*}
\begin{split}
	 F_{T,1}(\varphi(z))=&\frac{1}{T_4\t_1\t_2\t_4\t_6(T_3\t_5z - T_2\t_2)(\t_3z + \t_4)}\left( -T_3\t_3\t_5(T_3\t_1\t_4^2\t_5 - T_4\t_1\t_2\t_4\t_6 + T_4\t_2^2\t_3\t_6)z^3 - \right. \\
	 & \t_4(T_1T_3T_4\t_1\t_2\t_3\t_5\t_6 - T_2T_3T_4\t_1\t_2\t_3\t_5\t_6 - T_3^2T_4\t_1\t_4\t_5^2\t_6 - 2T_2T_3\t_1\t_2\t_3\t_4\t_5 + T_2T_4\t_1\t_2^2\t_3\t_6 
 - \\ & T_3T_4\t_1\t_2\t_4\t_5\t_6  + 2T_3T_4\t_2^2\t_3\t_5\t_6)z^2 +  \t_2\t_4(T_1T_2T_4\t_1\t_2\t_3\t_6 - T_1T_3T_4\t_1\t_4\t_5\t_6 - T_2^2T_4\t_1\t_2\t_3\t_6 \\ &- T_2T_3T_4\t_1\t_4\t_5\t_6   - T_2^2\t_1\t_2\t_3\t_4 - T_2T_4\t_1\t_2\t_4\t_6 -  T_3T_4\t_2\t_4\t_5\t_6)z 
	  + \left. T_1T_2T_4\t_1\t_2^2\t_4^2\t_6 \right).
\end{split}
\end{equation*} }%
We have already shown that the second hypothesis of Theorem~\ref{thm:bistability} holds. For the first hypothesis, a straightforward computation shows that
\[\det(J_{F_T}(\varphi(z))_{J,I})=\t_1\t_4\t_6T_4(\t_5 T_3 z - \t_2 T_2),\]
which is negative for every $z\in\mathcal{E}$. 
We further have $s=4$, $i=2$, $j=1$, and the independent term of the numerator of $F_{T,1}(\varphi(z))$ is negative, meaning that   $(F_{T,j}\circ \varphi)'(z_1)>0$. This gives that the sign of Eq.~\eqref{eq:importantsign}
is $(-1)^{4+1+2} (+1)(-1)=1$ positive. 
Using Theorem~\ref{thm:bistability}, we conclude that for every set of parameters such that $F_{T,1}(\varphi(z))$ has three roots $z_1<z_2<z_3$ in $\mathcal{E}$, the steady states $\varphi(z_1),\varphi(z_3)$ are exponentially stable and $\varphi(z_2)$ is unstable. 

\smallskip
All that is left is to show that the reduced network admits three positive steady states in some stoichiometric compatibility class for some choice of $\t$, or what is the same, that $F_{T,1}(\varphi(z))$ admits three roots in $\mathcal{E}$. 
We consider $\det(J_{F_T}(\phi))$ for a parametrization $\phi$ of the steady states:
{\footnotesize
\begin{align*}
\phi(x_3,x_4,x_5,x_6) &=\left( \frac{\t_5\t_6x_3x_5x_6}{\t_1\t_3x_4^2}, \frac{\t_6x_3x_6}{\t_3x_4},x_3 ,x_4x,x_5,x_6, \frac{\t_5\t_6x_3x_5x_6}{\t_2\t_3x_4}, \frac{\t_6x_3x_6}{\t_4}\right) \qquad \textrm{\normalsize{ and }}\\
\det(J_{F_T}(\phi)) & =\tau_{1}\tau_{2}\tau_{6}^{2}x_{3}x_{6}^{2}-\tau_{1}\tau_{4}\tau_{5}\tau_{6} x_{3}x_{5}x_{6}
+2\frac{\tau_{2}\tau_{5}\tau_{6}^{2}x_{3}x_{5}x_{6}^{2}}{x_{4}}+
\frac{\tau_{4}\tau_{5}^{2}\tau_{6}^{2}x_3x_{5}^{2}x_{6}^{2}}{\tau_{3}x_{4}^{2}}\\ & \quad +
\tau_{1}\tau_{2}\tau_{3}\tau_{6}x_{4}^{2}x_{6}+\tau_{1}\tau_{4}\tau_{5}\tau_{6}x_{4}x_{5}x_{6} +\tau_{1}\tau_{2}\tau_{3}\tau_{4}x_{4}^{2}+\tau_{1}\tau_{2}\tau_{4}\tau_{6}x_{4}x_{6}+\tau_{2}\tau_{4}\tau_{5}\tau_{6}x_{5}x_{6}.
\end{align*} }%
By letting $\tau_i=1$, for $i=1,\dots,6$ and $x_3=100, x_4=10,x_5=10,x_6=1$, $\det(J_{F_T}(\phi))=-280$, which is negative. Hence  applying the method in \cite{FeliuPlos}, we conclude that the stoichiometric compatibility class containing $\phi(100,10,10,1)$ has more than one positive steady state. Specifically, this class corresponds to $T_1=320, T_2=210, T_3=10,T_4=1$. Either by solving the steady state equations or finding the roots of $F_{T,1}(\varphi(z))$  for this choice of parameters, we confirm that the system has three positive steady states. 

Therefore, the reduced network is bistable for all choice of parameter values for which there are three positive steady states, and the original network admits bistability in some region of the parameter space.

\medskip
\noindent
\textbf{Network (b)}. This network is the combination of two one-site modification cycles in a cascade, where the same phosphatase \ce{F} acts in both layers.
\begin{align*}
	&\ce{S0 + E <=>[\k_1][\k_2] S0 E ->[\k_3]S1 +E} 	&\ce{S1 + F <=>[\k_4][\k_5] S1 F ->[\k_6] S0 + F}\\ 
	&\ce{P0 + S1 <=>[\k_7][\k_8] P0 S1 ->[\k_9] P1 + S1} 	&\ce{P1 + F <=>[\k_{10}][\k_{11}] P1 F ->[\k_{12}] P0 + F}.
\end{align*}

We rename the species as $E = X_1, F = X_2, S_0 = X_3, S_1 = X_4, P_0 = X_5, P_1 = X_6, S_0E = X_7, S_1F= X_8, P_0S_1 = X_9, P_1F= X_{10}$. For this network the polynomial $q_x$ has degree 6; and after some computations it was possible to prove that $H_1,H_2,H_3$ are positive polynomials. However, the sign of $H_4$ is unclear and the direct computation of $H_4(\phi)$ was not feasible as the memory in a regular PC was not enough. The positive steady state variety has a monomial parametrization $\phi$. We use the identification of monomials with tuples, to compute and analyze $H_4(\phi)$ more efficiently. With the notation in Section~\ref{sec:comp_challenges}, $\phi$ corresponds to
{\footnotesize \begin{align*}
	\beta & =\left( \frac{\k_4\k_6\k_{10}\k_{12}(\k_2+\k_3)(\k_8+\k_9)}{\k_1\k_3\k_7\k_9(\k_5\k_{11}+\k_5\k_{12}+\k_6\k_{11}+\k_6\k_{12})}\right.,  1, 1, \frac{(\k_8+\k_9)\k_{10}\k_{12}}{(\k_{11}+\k_{12})\k_7\k_9}, 1, 1,
\\
	& \frac{ \k_4\k_6\k_{10}\k_{12}(\k_8+\k_9)}{\k_3\k_7\k_9(\k_5\k_{11}+\k_5\k_{12}+\k_6\k_{11}+\k_6\k_{12})},
\frac{\k_4\k_{10}\k_{12}(\k_8+\k_9)}{\k_7\k_9(\k_5\k_{11}+\k_5\k_{12}+\k_6\k_{11}+\k_6\k_{12})}, \frac{\k_{10}\k_{12}}{\k_9(\k_{11}+\k_{12})}, \left. \frac{\k_{10}}{\k_{11}+\k_{12}}\right),
\end{align*}  
\[B= \left[ \begin{array}{cccccccccc}
	2&1&0&2&0&0&2&2&1&1\\
	-1&0&1&0&0&0&0&0&0&0\\
	-1&0&0&-1&1&0&1&-1&0&0\\
	1&0&0&1&0&1&1&1&1&1
\end{array}  \right] \qquad \mbox{\normalsize{ and }} \qquad \xi=(x_2,x_3 ,x_5,x_6).\]}%

Using this identification it was possible to compute $H_4(\phi)$. However, its sign was still unclear as we encountered both positive and negative coefficients. 

\smallskip
We then proceeded to reduce the network by  removing all reverse reactions and the intermediates \ce{S0 E} and \ce{S1 F}. That is, the reactions \ce{S0 + E -> S0 E ->S1 +E} and \ce{S1 + F -> S1 F -> S0 + F} are transformed into \ce{S0 + E ->S1 +E} and \ce{S1 + F ->[\t_2] S0 + F} respectively. We are left with the following reduced network 
\begin{align*}
	&\ce{S0 + E ->[\t_1]S1 +E} 	&\ce{S1 + F ->[\t_2] S0 + F}\\
	&\ce{P0 + S1 ->[\t_3] P0 S1 ->[\t_4] P1 + S1} 
	&\ce{P1 + F ->[\t_5] P1 F ->[\t_6] P0 + F}.
\end{align*}
In this network we rename the species as $E = X_1, F = X_2, S_0 = X_3, S_1 = X_4, P_0 = X_5, P_1 = X_6, P_0S_1 = X_9, P_1 F = X_{10}$. The polynomial $q_x$ has degree 4 and, after computing the Hurwitz determinants, we have that $H_1,H_2,H_3$ are positive. Therefore, the stability of the positive steady states depends on the sign of $H_4(\phi)$.  

The conservation laws of the system are
\[x_1=T_1, \quad x_5+x_6+x_9+x_{10}=T_2,\quad x_3+x_4+x_9=T_3,\quad x_2+x_{10}=T_4.\]
Taking the indices $i_1=1,i_2=2,i_3=3,i_4=5$ we define $F_T$ as in Eq.~\eqref{F}. Furthermore, the solutions of $F_{T,\ell}=0$ for $\ell\neq 5$ can be positively parametrized in terms of $z=x_6$ as 
{\footnotesize
\begin{align*}
	\varphi(z)=\left(T_1, \frac{\t_6T_4}{\t_5z+\t_6},\right. 
\frac{T_4\t_2\t_6( (T_3\t_4-T_4\t_6)\t_5z+T_3\t_4\t_6)}{\t_4(T_1\t_1\t_5z+T_1\t_1\t_6+T_4\t_2\t_6)(\t_5z+\t_6)}, \frac{T_1\t_1( (T_3\t_4-T_4\t_6)\t_5z+T_3\t_4\t_6)}{\t_4(T_1\t_1\t_5z+T_1\t_1\t_6+T_4\t_2\t_6)},\\
	\frac{\t_5\t_6T_4\t_4(T_1\t_1\t_5z+T_1\t_1\t_6+T_4\t_2\t_6)z}{T_1\t_1\t_3( (T_3\t_4-T_4\t_6)\t_5z+T_3\t_4\t_6)(\t_5z+\t_6)},z,\left.\frac{\t_5\t_6T_4z}{\t_4(\t_5z+\t_6)},\frac{T_4\t_5z}{\t_5z+\t_6}\right)
\end{align*} }%
for  $z\in \mathcal{E}$, where $\mathcal{E}=\R_{>0}$ if $T_3\t_4-T_4\t_6>0$ or $\mathcal{E}=\left( 0,\frac{T_3\t_4\t_6}{\t_5(T_4\t_6-T_3\t_4)}\right)$ if $T_3\t_4-T_4\t_6\leq 0$.  Note that the positive steady states in the stoichiometric compatibility class defined by $T$ are in one to one correspondence with the zeros of $F_{T,5}(\varphi(z))$, below, contained in $\mathcal{E}$. 
{\footnotesize 
\begin{equation*}
\begin{split}
	F_{T,5}(\varphi(z))=&\frac{1}{T_1\t_1\t_3\t_4(T_3\t_4\t_5z - T_4\t_5\t_6z + T_3\t_4\t_6)(\t_5z +  \t_6)} \left[ T_1\t_1\t_3\t_4\t_5^2(T_3\t_4 - T_4\t_6)z^3 \right. - T_1\t_1\t_5 (T_2T_3\t_3\t_4^2\t_5 \\ &
- T_2T_4\t_3 \t_4\t_5\t_6  - T_3T_4\t_3\t_4^2\t_5 -  T_3T_4\t_3\t_4\t_5\t_6+ T_4^2\t_3\t_4\t_5\t_6 + T_4^2\t_3\t_5\t_6^2 - 2T_3\t_3\t_4^2\t_6 + T_4\t_3\t_4\t_6^2\\ &  - T_4\t_4^2\t_5\t_6)z^2  - \t_4\t_6(2T_1 T_2T_3\t_1\t_3\t_4\t_5  - T_1T_2T_4\t_1\t_3\t_5\t_6 -  T_1T_3T_4\t_1\t_3\t_4\t_5 \\
	   &- T_1T_3T_4\t_1\t_3\t_5\t_6 - T_1T_3\t_1\t_3\t_4\t_6 - T_1T_4\t_1\t_4\t_5\t_6 - T_4^2\t_2\t_4\t_5\t_6)z - T_1T_2T_3\t_1\t_3\t_4^2\t_6^2 ].
\end{split}
\end{equation*}  }%

The numerator of this univariate rational function has degree 3 and the denominator is positive in $\mathcal{E}$. Additionally, $\varphi_6(z)=z$ and $\varphi_6'(z)=1\neq 0$. With this parametrization where $i=6,j=5$, we have all the elements required in the statement of Theorem~\ref{thm:bistability}. We also know from the analysis of the Hurwitz determinants that the second hypothesis of Theorem~\ref{thm:bistability} holds. The first hypothesis of the theorem also holds as 
\[\det(J_{F_T}(\varphi(z))_{J,I})=-\t_3T_1\t_1((T_3\t_4-T_4\t_6)\t_5z+T_3\t_4\t_6)\]
is negative for every $z\in\mathcal{E}$. Using Theorem~\ref{thm:bistability}, and the fact that the independent term of $F_{T,5}(\varphi(z))$ is negative, we conclude that for every set of parameters such that $F_{T,5}(\varphi(z))$ has 3 positive roots $z_1<z_2<z_3$ in $\mathcal{E}$, the sign of  Eq.~\eqref{eq:importantsign}
 is $(-1)^{6+5+4}(1)(-1)=1$   and thus $\varphi(z_1),\varphi(z_3)$ are exponentially stable and $\varphi(z_2)$ is unstable. 

\smallskip
It remains to see that there is a set of parameters such that the network admits three positive steady states. We find $\det(J_{F_T}(\phi))$  for a parametrization $\phi$:
{\footnotesize \begin{align*}
\phi(x_2,x_3,x_5,x_6)&=  \left( \frac{\t_2\t_5x_2^2x_6}{\t_1\t_3x_3x_5},x_2,x_3,\frac{\t_5x_2x_6}{\t_3x_5},x_6 ,\frac{\t_5x_2x_6}{\t_4}, \frac{\t_5x_2x_6}{\t_6}  \right) \qquad \textrm{ and } \\
\det(J_{F_T}(\phi)) & =\frac{1}{\t_3x_3x_5^2} \Big(\t_2\t_5x_2^2(\t_3\t_4\t_5x_2x_3x_5x_6-\t_3\t_4\t_5x_3x_5^2x_6+\t_3\t_4\t_5x_3x_5x_6^2+\t_3\t_5\t_6x_2x_3x_5x_6\\
& \qquad +\t_3\t_5\t_6x_2x_5^2x_6  +\t_4\t_5^2x_2^2x_6^2+\t_4\t_5^2x_2x_6^3+\t_5^2\t_6x_2^2x_6^2+\t_3\t_4\t_6x_3x_5^2+\t_3\t_4\t_6x_3x_5x_6\\
& \qquad+\t_4\t_5\t_6x_2x_5x_6+\t_4\t_5\t_6x_2x_6^2)\Big).
\end{align*}}%
Taking $\t_i=1$ for $i=1,\ldots ,6$ and $x_2 = 1, x_3 = 10, x_5 = 20, x_6 = 10$ this determinant is -4500. By \cite{FeliuPlos}, the stoichiometric compatibility class containing $\phi(1,10,20,10)$ has more than one positive steady state. The total amounts defining it are $T_1 = \frac{1}{20}, T_2 = 50, T_3 = \frac{41}{2}, T_4 = 11$. Using these parameters and solving either $F_T(x)=0$ or finding the roots of $F_{T,5}(\varphi)$ we verify that this stoichiometric compatibility class has in fact three positive steady states. 

We conclude that the reduced network is bistable for every set of parameters for which there are three positive steady states, and the original network admits bistability in some region of the parameter space.

\medskip
\noindent
\textbf{Network (c)}. This network is the combination of two one-site modification cycles in a cascade, where the same kinase \ce{E} acts in both layers:
\begin{align*}
	&\ce{S0 + E <=>[\k_1][\k_2] S0 E ->[\k_3] S1 + E} & 
	&\ce{S1 + F1 <=>[\k_4][\k_5] S1 F1 ->[\k_6] S0 + F1}\\ 
	&\ce{P0 + S1 <=>[\k_7][\k_8] P0 S1 ->[\k_9] P1 + S1} &
	&\ce{P0 + E <=>[\k_{13}][\k_{14}] P0 E ->[\k_{15}] P1 + E}\\
	&\ce{P1 + F2 <=>[\k_{10}][\k_{11}] P1 F2 ->[\k_{12}] P0 + F2}.
\end{align*}
The species are renamed as $S_0 = X_1, S_1 = X_2, P_0 = X_3, P_1 = X_4, E = X_5, F_1 = X_6, F_2 = X_7, S_0E = X_8, S_1F_1 = X_9, P_0S_1 = X_{10}, P_1F_2= X_{11}, P_{0}E = X_{12}$. For this network, the polynomial $q_x$ has degree 7 and the determinants $H_1,H_2,H_3$ are positive polynomials. However, the computations of the other determinants was not possible as there was not enough memory to store the computations in a regular PC. In this case, we could not parametrize the positive steady state variety by monomials; therefore, it is not possible to use the identification between monomials and tuples to analyze the sign of the remaining determinants.

\medskip
We proceed directly to reduce the network by removing all the reverse reactions first, and then the intermediates \ce{S0 E},\ce{P0 S1} and \ce{P1 F2}. That is, the reactions \ce{S0 + E -> S0 E -> S1 + E}, \ce{P0 + S1 -> P0 S1 -> P1 + S1} and \ce{P1 + F2 -> P1 F2 -> P0 + F2} are transformed into \ce{S0 + E  -> S1 + E}, \ce{P0 + S1  -> P1 + S1} and \ce{P1 + F2  -> P0 + F2} respectively. We are left with the following network
\begin{align*}
	&\ce{S0 + E  ->[\t_1] S1 + E} & 
	&\ce{S1 + F1  ->[\t_2] S1 F1 ->[\t_3] S0 + F1}\\ 
	&\ce{P0 + S1  ->[\t_4] P1 + S1}&
	&\ce{P0 + E ->[\t_6] P0 E ->[\t_7] P1 + E} & 
	&\ce{P1 + F2  ->[\t_5] P0 + F2}.
\end{align*}
The species are renamed as $S_0 = X_1, S_1 = X_2, P_0 = X_3, P_1 = X_4, E = X_5, F_1 = X_6, F_2 = X_7, S_1F_1 = X_9, P_0E = X_{12}$. The polynomial $q_x$ associated with this network has degree 4 and $H_1,H_2,H_3$ are positive after evaluating in the following positive parametrization $\phi$ of the steady state variety:

{\small
\[\phi(x_1,x_2,x_3,x_6,x_7)=\left(x_1,x_2,x_3,\frac{x_2x_3(\t_1\t_4x_1 + \t_2\t_6x_6)}{\t_1\t_5x_1x_7}, \frac{\t_2x_2x_6}{\t_1x_1},x_6,x_7, \frac{\t_2x_2x_6}{\t_3}, \frac{ \t_6x_3\t_2x_2x_6}{\t_1x_1\t_7}  \right).\]}%
However, the sign of $H_4$ is unclear. The conservation laws of the system are 
\[x_1+x_2+x_9=T_1,\quad x_3+x_4+x_{12}=T_2,\quad x_5+x_{12}=T_3, \quad x_6+x_9=T_4,\quad x_7=T_5.\]
Taking $i_1=1,i_2=3,i_3=5,i_4=6,i_5=7$ we define $F_T$ as in Eq.~\eqref{F}. Additionally, the solutions of $F_{T,\ell}(x)=0$ for $\ell\neq 6$ can be   parametrized in terms of $z=x_3$ as 
{\footnotesize
\begin{align*}
\varphi(z)=&\left(\frac{b_1(z) \t_3}{(\t_3\t_6z+T_3\t_1\t_7+\t_3\t_7)\t_4z}, 
\frac{b_2(z)}{\t_4z(\t_6z+\t_7)},z,\frac{(-\t_6z^2+(T_2 \t_6-T_3\t_6-\t_7)z+T_2\t_7)}{\t_6z+\t_7}, \frac{T_3\t_7}{\t_6z+\t_7},\right. \\ &  \quad 
\left.\frac{b_1(z) T_3\t_1\t_3\t_7}{\t_2(\t_3\t_6z+T_3\t_1\t_7+\t_3\t_7)b_2(z)},T_5,
\frac{b_1(z) T_3\t_1\t_7}{\t_4z(\t_6z+\t_7) (\t_3\t_6z+T_3\t_1\t_7+\t_3\t_7)}, \frac{\t_6T_3z}{\t_6z+\t_7}\right),
\end{align*}}%
where

{\small
\begin{align*}
b_1(z) & =(T_1\t_4+T_5\t_5)\t_6z^2 + (-T_2T_5\t_5\t_6+T_3T_5\t_5\t_6+T_1\t_4\t_7+T_3\t_6 \t_7+T_5\t_5\t_7)z-T_2T_5\t_5\t_7, \\
b_2(z) & =-T_5\t_5\t_6z^2+(T_2T_5 \t_5\t_6-T_3T_5 \t_5\t_6-T_3\t_6\t_7-T_5\t_5\t_7)z+T_2T_5\t_5\t_7. 
\end{align*}}%
The parametrization is positive if and only if  $b_1(z),b_2(z)>0$. This happens for $z\in \mathcal{E}$, where $\mathcal{E}=(\beta_1,\beta_2)$ with $\beta_1$ and $\beta_2$ the (only) positive roots of the polynomials $b_1$ and $b_2$ respectively.
 
The steady states in each stoichiometric compatibility class are in one to one correspondence with the roots of $F_{T,6}(\varphi(z))$   in $\mathcal{E}$. The numerator of this function is a polynomial of degree 5 and the denominator is positive in $\mathcal{E}$. Additionally, $\varphi_3(z)=z$ and $\varphi_3'(z)=1\neq 0$. With this parametrization and taking $i=3,j=6,s=4$, we have all the elements in the statement of Theorem~\ref{thm:bistability}. The first three Hurwitz determinants are positive and thus, the second hypothesis of the theorem holds. The first hypothesis  holds as
\begin{align*}
\det(J_F(\varphi(z))_{J,I})=\frac{-\t_2(\t_3\t_6z+T_3\t_1\t_7+\t_3\t_7)\, b_2(z)}{\t_6z+\t_7},
\end{align*}
 is negative for all $z\in\mathcal{E}$. 
We need to decide the sign of $(F_{T,j}\circ \varphi)'(z_1)$ at the first root of the numerator of $F_{T,j}\circ \varphi$. Indirect evaluation of  $F_{T,j}\circ \varphi$ at $\beta_1$, by isolating $T_2$ from $b_2(z)=0$ and substitution into  $F_{T,j}\circ \varphi$, shows that  $F_{T,j}\circ \varphi$ is negative at $\beta_1$. Hence the derivative at $z_1$ is positive. 
By Theorem~\ref{thm:bistability}, we conclude that for every set of parameters such that $F_{T,6}(\varphi(z))$ has more than two positive roots $z_1,z_2,\ldots$, the sign of Eq.~\eqref{eq:importantsign}  
 is $(-1)^{3+6+4}(1)(-1)=1$ positive. Therefore, the steady states $\varphi(z_1),\varphi(z_3),\ldots$ are exponentially stable and $\varphi(z_2),\ldots$ are unstable. 

It remains to see that there is a set of parameters such that the stoichiometric compatibility class in fact contains more than two positive steady states.  We find $\det(J_F(\phi))$, with $\phi$ as above:
{\small
\begin{align*}
	\det(J_F(\phi))=&-\t_1\t_2^2\t_4\t_6x_1^2x_2^2x_3x_6+\t_1\t_2\t_3\t_5\t_6x_1^2x_3x_6x_7+\t_1\t_2\t_3\t_4\t_7x_1^2x_2x_6+\t_1\t_2\t_3\t_5\t_7x_1^2x_6x_7  \\
	& +\t_1\t_2^2\t_4\t_6x_1x_2^3x_3x_6 + \t_1\t_2^2\t_4\t_7x_1x_2^3x_6+\t_1\t_2^2\t_4\t_6x_1x_2^2x_3x_6^2+\t_1\t_2^2\t_5\t_6x_1x_2^2x_3x_6x_7\\
	& +\t_1\t_2^2\t_4\t_7x_1x_2^2x_6^2+\t_1\t_2^2\t_5\t_7x_1x_2^2x_6x_7 + \t_1\t_2\t_3\t_4\t_6x_1x_2^2x_3x_6 +\t_1\t_2\t_3\t_4\t_7x_1x_2^2x_6 \\
&+\t_1\t_2^2\t_5\t_6x_1x_2x_3x_6^2x_7+\t_1\t_2\t_3\t_5\t_6x_1x_2x_3x_6x_7 +\t_2^2\t_5(\t_1\t_7+\t_3\t_6)x_1x_2x_7x_6^2\\
&+ \t_1\t_2\t_3\t_5\t_7x_1x_2x_6x_7+\t_2^2\t_3\t_6\t_7x_1x_2x_6^2+\t_2^3\t_5\t_6x_2^3x_6^2x_7+x_2^2\t_5x_7\t_2^3\t_6x_6^3 \\
	&+\t_2^3\t_6\t_7x_2^3x_6^2  + \t_2^3\t_6\t_7x_2^2x_6^3+\t_2^2\t_3\t_5\t_6x_2^2x_6^2x_7+\t_2^2\t_3\t_6\t_7x_2^2x_6^2.
\end{align*} }%
Taking $\t_i=1$ for $i=1,\ldots ,7$ and $x_1 = 40, x_2 = 10, x_3 = 1, x_6 = 1, x_7 = 1$, this determinant is $-32,000$. By  \cite{FeliuPlos},  it follows that  the stoichiometric compatibility class containing $\phi(x)$ has more than one positive steady state. The total amounts defining the stoichiometric compatibility class are $T_1= 60, T_2 = \frac{23}{2}, T_3 = \frac{1}{2}, T_4 = 11, T_5 = 1$. Using these parameters and solving either $F_T(x)=0$ or $F_{T,6}(\varphi(z))=0$ we verify that this stoichiometric compatibility class has three positive steady states as desired. 

The reduced network is thus bistable whenever it has  three positive steady states, and the original network admits bistability in some   parameter region.

\medskip
\textbf{Network (d)}. In this network a kinase \ce{E} and a phosphatase \ce{F} act on a substrate \ce{S0} and the two sites of its phosphorylation \ce{S1} and \ce{S2}.
\begin{align*}
	&\ce{E + S0 <=>[\k_1][\k_2] S0 E ->[\k_3] E + S1 <=>[\k_7][\k_8] S1 E ->[\k_9]S2 + E }\\
	&\ce{F + S2 <=>[\k_{10}][\k_{11}] S2 F ->[\k_{12}] F + S1 <=>[\k_4][\k_5] S1 F ->[\k_6] F + S0}
\end{align*}
The species are renamed as $E = X_1, F = X_2, S_0 = X_3, S_1 = X_4, S_2 = X_5, S_0E = X_6, S_1E = X_7, S_2F = X_8, S_1F = X_9$. In this network the polynomial $q_x$ has degree 6 and, the first three Hurwitz determinants are polynomials with positive coefficients; therefore, they are positive for every positive steady state. Regarding the determinants $H_4$ and $H_5$, the sign of $H_4$ is unclear and the analysis of the sign of $H_5$ was stopped as the computations could not be stored in a regular PC. In this case, the positive steady state variety can be parametrized by monomials $\phi$, which,  with the notation in Section~\ref{sec:comp_challenges}, corresponds to 
{\footnotesize \begin{align*}
	\beta=&\left( 1,1, \frac{(\k_2+\k_3)\k_4\k_6\k_{10}\k_{12}(\k_8+\k_9)}{\k_1\k_3\k_7\k_9(\k_5+\k_6)(\k_{11}+\k_{12})} , \frac{(\k_8+\k_9)\k_{10}\k_{12}}{\k_7\k_9(\k_{11}+\k_{12})},1,\frac{ \k_4\k_6\k_{10}\k_{12}(\k_8+\k_9)}{\k_3\k_7\k_9(\k_5+\k_6)(\k_{11}+\k_{12})},  \right. \\
	& \left.\frac{\k_{10}\k_{12}}{\k_9(\k_{11}+\k_{12})}, \frac{\k_{10}}{\k_{11}+\k_{12}},\frac{\k_4\k_{10}\k_{12}(\k_8+\k_9)}{\k_7\k_9(\k_5\k_{11}+\k_5\k_{12}+\k_6\k_{11}+\k_6\k_{12})} \right),
\end{align*} 
\[B=\left[\begin{array}{ccccccccc}
1 & 0 & -2 & -1 & 0 & -1 & 0 & 0 & -1\\
0 & 1 & 2 & 1 & 0 & 2 & 1 & 1 & 2 \\
0 & 0 & 1 & 1 & 1 & 1 & 1 & 1 & 1
\end{array}  \right] \qquad \mbox{ and } \qquad \xi=(x_1,x_2,x_5).\]}%
Using  tuples, it was possible to compute $H_4(\phi)$ and $H_5(\phi)$ and, after studying their sign, we found that $H_4(\phi)$ is a positive polynomial, but the sign of $H_5(\phi)$ was not clear. 

We then proceed to reduce the network by removing all the reverse reactions and the intermediates \ce{S_1E}, \ce{S_2F} and \ce{S_1F}. That is, the reactions \ce{S1 + E ->S1 E -> S2 + E }, \ce{S2 + F -> S2 F -> S1 + F} and \ce{S1 + F -> S1 F -> S0 + F} become \ce{S1 + E -> S2 + E }, \ce{S2 + F -> S1 + F} and \ce{S1 + F  -> S0 + F} respectively. The reduced network is 
\begin{align*}
	&\ce{S0 + E ->[\t_1] E S0 ->[\t_2] S1 + E ->[\t_3]S2 + E }
	&\ce{S2 + F ->[\t_4] S1 + F ->[\t_5] S0 + F}.
\end{align*}
We rename the species as $E = X_1, F = X_2, S_0 = X_3, S_1 = X_4, S_2 = X_5, S_0E = X_6$. The polynomial $q_x$ has degree 3 and, after computing the Hurwitz determinants, it was possible to prove that $H_1$ and $H_2$ are polynomials with positive coefficients and, thus positive for every positive steady state. However, the sign of $H_3$ was unclear even after evaluating at the following positive parametrization $\phi$:
\[\phi(x_1,x_2,x_3)=\left(x_1,x_2,x_3,\frac{\t_1x_1x_3}{\t_5x_2}, \frac{\t_3x_1^2\t_1x_3}{\t_4\t_5x_2^2}, \frac{\t_1x_1x_3}{\t_2} \right).\]
The conservation laws of the system are
$x_1+x_6=T_1$, $x_2=T_2$ and $x_3+x_4+x_5+x_6=T_3$.
Taking $i_1=1,i_2=2,i_3=3$ we define $F_T$ as in Eq.~\eqref{F}. The solutions of $F_{T,\ell}(x)=0$ for $\ell \neq 1$ can be parametrized in terms of $z=x_1$ as
{\footnotesize
\begin{align*}
	\varphi (z) = & \left( z, T_2,  \frac{T_2^2T_3\t_2\t_4\t_5}{\t_1\t_2\t_3z^2+T_2\t_1\t_4(T_2\t_5 +\t_2)z+T_2^2\t_2\t_4\t_5} \right. ,\frac{ T_2T_3\t_1\t_2\t_4z}{\t_1\t_2\t_3z^2+T_2\t_1\t_4(T_2\t_5 +\t_2)z+T_2^2\t_2\t_4\t_5}, \\
	&\left.  \frac{T_3\t_1\t_2\t_3z^2}{\t_1\t_2\t_3z^2+T_2\t_1\t_4(T_2\t_5 +\t_2)z+T_2^2\t_2\t_4\t_5}, \frac{T_2^2T_3\t_1\t_4\t_5z}{\t_1\t_2\t_3z^2+T_2\t_1\t_4(T_2\t_5 +\t_2)z+T_2^2\t_2\t_4\t_5}\right).
\end{align*} }%
This parametrization is positive for every $z\in\mathcal{E}=\R_{>0}$ and the positive steady states in the stoichiometric compatibility class are in one to one correspondence with the positive roots of $F_{T,1}(\varphi(z))$. This is a rational function whose numerator is a polynomial of degree $3$ and positive denominator:
{\scriptsize
\begin{equation*}
\begin{split}
	F_{T,1}(\varphi(z))= \frac{1}{\t_1\t_2\t_3z^2+T_2\t_1\t_4(T_2\t_5 +\t_2)z+T_2^2\t_2\t_4\t_5} & \left[ \t_1\t_2\t_3z^3 + (T_2^2\t_1\t_4\t_5 - T_1\t_1\t_2\t_3 + T_2\t_1\t_2\t_4)z^2 + \right.\\
	& \hspace{-1.5cm} \left. (-T_1T_2^2\t_1\t_4\t_5 + T_2^2T_3\t_1\t_4\t_5 - T_1T_2\t_1\t_2\t_4 + T_2^2\t_2\t_4\t_5)z - T_1T_2^2\t_2\t_4\t_5 \right] .
\end{split}
\end{equation*}}%
Therefore, there are at most three positive steady states in each stoichiometric compatibility class. Additionally $\varphi_1(z)=z$ and $\varphi_1'(z)=1\neq 0$. With this parametrization, where, $i=j=1$ and $s=3$, we have all the elements in the statement of Theorem~\ref{thm:bistability}.  The second hypothesis of the theorem holds, and the first  hypothesis also does as
\[\det(J_F(\varphi(z))_{J,I})=-T_2^2\t_1\t_4\t_5z-\t_1\t_2\t_3z^2-T_2\t_1\t_2\t_4z-T_2^2\t_2\t_4\t_5,\]
is negative for every $z\in\R_{>0}$. We are in the setting of Theorem~\ref{thm:bistability} and we conclude that for every set of parameters such that $F_{T,1}(\varphi(z))$ has three positive roots $z_1<z_2<z_3$, the sign of 
$(F_{T,j}\circ \varphi)'(z_1)$ is positive as the independent term of $(F_{T,j}\circ \varphi)(z)$ is negative. Furthermore, the sign of  
Eq.~\eqref{eq:importantsign}  
is equal to $(-1)^{1+1+3}(1)(-1)=1$, which is positive. This implies that the steady states $\varphi(z_1),\varphi(z_3)$ are exponentially stable and $\varphi(z_2)$ is unstable. 

We now verify that there exists a set of parameters such that the network has three positive steady states. 
As above, we have 
\[\det(J_{F_T}(\phi))=-\t_1^2\t_3x_1^2x_3+\t_1\t_4\t_5x_1x_2^2+\t_1\t_4\t_5x_2^2x_3+\t_1\t_2\t_3x_1^2+\t_1\t_2\t_4x_1x_2+\t_2\t_4\t_5x_2^2.\]
Taking $\t_i=1$ for $i=1,\ldots ,5$ and $x_1= 5, x_2 = 1, x_3 = 5$, this determinant is equal to -84. By  \cite{FeliuPlos}, the stoichiometric compatibility class containing $\phi(x)$ has more than one positive steady state. The total amounts defining it are $T_1 = 30, T_2 = 1, T_3 = 180$. Using these parameters and solving either $F_T(x)=0$ or $F_{T,1}(\varphi(z))=0$ we prove that there are three positive steady states in his stoichiometric compatibility class, as desired. 

The reduced network is bistable whenever it has three positive steady states, and hence the original network admits bistability in some parameter region.  

\medskip
\textbf{Network (e)}. In this network two substrates \ce{S0} and \ce{P0} are phosphorylated by the same kinase \ce{E}, and dephosphorylated by the same phosphatase \ce{F}.
\begin{align*}
	&\ce{S0 + E <=>[\k_1][\k_2] S0 E ->[\k_3] S1 + E} & 
	&\ce{S1 + F <=>[\k_4][\k_5] S1 F ->[\k_6] S0 + F}\\
	&\ce{P0 + E <=>[\k_7][\k_8] P0 E ->[\k_9] P1 + E} &
	&\ce{P1 + F <=>[\k_{10}][\k_{11}] P1 F ->[\k_{12}] P0 + F}.
\end{align*}
We rename the species as $E = X_1, F = X_2, S_0 = X_3, S_1 = X_4, P_0= X_5 , P_1 = X_6, S_0E = X_7, S_1F = X_8, P_0E = X_9, P_1F = X_{10}$. The polynomial $q_x$ has degree 6, and the first three Hurwitz determinants are positive for every positive steady state. The sign of $H_4$ is unclear, and the analysis of the sign of $H_5$ was interrupted due to memory problems.

The positive steady state variety admits a parametrization $\phi$ by monomials, which in the notation of Section~\ref{sec:comp_challenges}, corresponds to
{\footnotesize
\begin{align*}
	\beta=\left( 1,1,\frac{\k_4\k_6(\k_2+\k_3)}{\k_1\k_3(\k_5+\k_6)}, 1, \frac{(\k_8+\k_9)\k_{10}\k_{12}}{(\k_{11}+\k_{12})\k_7\k_9}, 1, \frac{\k_4\k_6}{\k_3(\k_5+\k_6)} , \frac{\k_{10}\k_{12}}{\k_9(\k_{11}+\k_{12})}, \frac{\k_4}{\k_5+\k_6},\frac{ \k_{10}}{\k_{11}+\k_{12}} \right),
\end{align*}
\[B=\left[ \begin{array}{cccccccccc}
	1 &0& -1& 0& -1& 0& 0& 0& 0& 0\\
	0 &1& 1& 0& 1& 0& 1& 1& 1& 1 \\
	0& 0& 1& 1& 0& 0& 1& 0& 1& 0\\
	0& 0& 0& 0& 1& 1& 0& 1& 0& 1
\end{array} \right] \quad \mbox{\normalsize{ and }}\quad \xi=(x_1,x_2,x_4,x_6). \] }%
Using  tuples, we computed $H_4(\phi)$ and $H_5(\phi)$ and, after studying the signs of their coefficients, we have that $H_4(\phi)$ is positive but the sign of $H_5(\phi)$ is still unclear. 

\smallskip
We proceed to reduce the network by removing all the reverse reactions and the intermediates \ce{S0 E} and \ce{P1 F}. That is, the reactions \ce{S0 + E -> S0 E -> S1 + E} and \ce{P1 + F -> P1 F -> P0 + F} become \ce{S0 + E -> S1 + E} and \ce{P1 + F -> P0 + F} respectively. The reduced network is 
\begin{align*}
	&\ce{S0 + E  ->[\t_1] S1 + E} & 
	&\ce{S1 + F ->[\t_2] S1 F ->[\k_3] S0 + F}\\
	&\ce{P0 + E ->[\t_4] P0 E ->[\k_5] P1 + E}&
	&\ce{P1 + F ->[\t_6] P0 + F}.
\end{align*}
The species are renamed as $E = X_1, F = X_2, S_0 = X_3, S_1 = X_4, P_0 = X_5, P_1 = X_6, S_1F = X_8, P_0E = X_9$. The polynomial $q_x$   has degree 4. The Hurwitz determinants $H_1,H_2$ and $H_3$ are positive for every positive steady state. However, the sign of $H_4$ is unclear even after evaluating at the parametrization $\phi$ below: 
{\small \begin{equation*}
	\phi(x_1,x_2,x_4,x_6)=\left(x_1,x_2,\frac{\t_2x_2x_4}{\t_1x_1},x_4, \frac{\t_6x_2x_6}{\t_4x_1} ,x_6,  \frac{\t_6x_2x_6}{\t_5}, \frac{\t_2x_2x_4}{\t_3} \right).
\end{equation*} }%

The conservation laws of the system are 
\[x_1+x_8=T_1,\quad x_2+x_9=T_2,\quad x_3+x_4+x_9=T_3 \quad \mbox{ and }\quad x_5+x_6+x_8=T_4.\]
Taking $i_1=1,i_2=2,i_3=3,i_4=5$ we define $F_T(x)$ as in Eq.~\eqref{F}. Furthermore, the solutions of $F_{T,\ell}(x)=0$ for $\ell\neq 2$ can be positively parametrized in terms of $z=x_5$ as
{\scriptsize
\begin{align*}
	\varphi (z)=& \left( \frac{T_1\t_5}{\t_4z+\t_5},\frac{-T_1\t_4\t_5z}{\t_6\, b_1(z)}, 
	\frac{(\t_4z+\t_5)T_3\t_2\t_3\t_4z}{b_2(z)}, 
	\frac{-\t_1T_3\t_3\t_6\, b_1(z)}{b_2(z)}, z, 
	\frac{ -b_1(z)}{\t_4z+\t_5},\frac{\t_4T_1z}{\t_4z+\t_5}, 
	\frac{ T_3T_1\t_1\t_2\t_4\t_5z}{b_2(z)} \right)
\end{align*}}%
where
{\small
\begin{align*}
b_1(z) &:= \t_4z^2+ ( (T_1-T_4)\t_4+\t_5)z-T_4\t_5, \\
b_2(z) &:=(-\t_1\t_3\t_4\t_6+\t_2\t_3\t_4^2)z^2+(T_1\t_1\t_2\t_4\t_5-T_1\t_1\t_3\t_4\t_6+T_4\t_1\t_3\t_4\t_6-\t_1\t_3\t_5\t_6+\t_2\t_3\t_4\t_5)z \\ & \qquad +T_4\t_1\t_3\t_5\t_6.
\end{align*}}%
Here $\mathcal{E}=(0,\beta_1)$ where $\beta_1$ is the positive root of $b_1(z)$, such that $b_1(z)<0$ and $b_2(z)>0$ in $\mathcal{E}$.

The positive steady states in the stoichiometric compatibility class are in one to one correspondence with the roots of $F_{T,2}(\varphi(z))$. This is a rational function whose numerator is a polynomial of degree four with positive independent term and   denominator  positive in $\mathcal{E}$. Additionally, $\varphi_5(z)=z$ and $\varphi_5'(z)=1\neq 0$ and taking $i=5,j=2$ we have all the elements in the statement of Theorem~\ref{thm:bistability}. The second hypothesis in the theorem holds, and so does the first as
\[\det(J_F(\varphi(z))_{J,I})=\frac{- T_1\t_5\, b_2(z)}{\t_4z+\t_5}\]
 is negative for  $z\in\mathcal{E}$. The sign of $(F_{T,2}\circ \varphi)'(z_1)$, for $z_1$ the first root of  $(F_{T,2}\circ \varphi)(z_1)$, is positive as the independent term of $(F_{T,2}\circ \varphi)(z)$ is negative. Hence, the sign of Eq.~\eqref{eq:importantsign}
is $(-1)^{5+2+4}(1)(-1)=1$, which is positive. Thus, by  Theorem~\ref{thm:bistability}, we conclude that the steady states $\varphi(z_1),\varphi(z_3)$ are exponentially stable and $\varphi(z_2),\ldots$ are unstable. 

To show that the network admits three positive steady states,  we consider   $\det(J_F(\phi))$:
{\footnotesize
\begin{align*} \det(J_F(\phi))= & \t_1\t_2\t_4\t_6x_1^2x_2^2+\t_1\t_2\t_4\t_6x_1^2x_2x_4+(\t_2\t_5+\t_3\t_6)\t_1 \t_4x_1^2x_2+\t_1\t_2\t_4\t_5x_1^2x_4+\t_1\t_3\t_4\t_5x_1^2  \\
	&  +(\t_1\t_5+\t_3\t_4)\t_2\t_6x_1x_2^2+\t_1\t_2\t_5\t_6x_1x_2x_4+(\t_1\t_6+\t_2\t_4)\t_3\t_5 x_1x_2+\t_1\t_2\t_6^2x_2^3x_6+\\
	& \t_2\t_3\t_5\t_6x_2^2	-\left(\frac{\t_2}{\t_1}-\frac{\t_6}{\t_4}\right)\t_1\t_2\t_4\t_6x_6x_4x_2^2+\t_1\t_3\t_6^2x_6x_2^2+\frac{\t_2\t_3\t_6^2x_2^3x_6}{x_1}.
	\end{align*}}%
Taking $\t_1=\t_3=\t_4=\t_5 =\t_6= 1, \t_2 = 2$ and $x_1 =x_2 = 1, x_4 = 9, x_6 = 9$, the value of the determinant is $-48$. By  \cite{FeliuPlos}, the stoichiometric compatibility class containing $\phi(1,1,9,9)$ has more than one positive steady states and, solving either $F_T(x)=0$ or $F_{T,2}(\varphi(z))=0$ we verify that it has three positive steady states.

 The reduced network is bistable for every set of parameters for which there are three positive steady states, and hence the original network admits bistability in some parameter region.

\medskip
\textbf{Network (f)}. $E$ corresponds to a kinase that exists in two conformations: $E_1$ (relaxed state) and $E_2$ (tensed state) \cite{feng:allosteric}. Each conformation acts as a kinase for a common substrate $S_0$. We denote by  $S_1$ the phosphorylated form of the substrate. We assume that the intermediate kinase-substrate complexes, $E_1S_0$ and $E_2S_0$, also undergo conformational change.
\begin{align*}
	&\ce{E1 + S0 <=>[\k_1][\k_2] E1 S0 ->[\k_3] E1 + S1 }& 
	&\ce{E2 + S0 <=>[\k_4][\k_5] E2 S0 ->[\k_6] E2 + S1}\\
	&\ce{E1 <=>[\k_8][\k_9] E2}& 
	&\ce{E1 S0 <=>[\k_{10}][\k_{11}] E2 S0}& 
	&\ce{S1 ->[\k_7] S0}.
\end{align*}
The species are renamed as $E_1 = X_1, E_2 = X_2, E_1S_0 = X_3, E_2S_0 = X_4, S_0= X_5, S_1 = X_6$. The polynomial $q_x$ associated with this network has degree 4. After computing and evaluating the Hurwitz determinants at a positive parametrization $\phi$ of the positive steady state variety, the sign of $H_3$ and $H_4$ is unclear. Since the hypotheses of Theorem~\ref{thm:bistability} do not hold, we reduce the network by removing all the reverse reactions (with rate constants $\k_2,\k_5,\k_9,\k_{10}$) and the intermediate \ce{E2 S0}. When removing \ce{E2 S0}, the reactions \ce{E2 + S0 -> E2 S0 -> E2 + S1} become \ce{E2 + S0 -> E2 + S1} and the path \ce{E2 + S0 -> E2 S0 -> E1 S0} is collapsed to \ce{E2 + S0 -> E1 S0}. 
The intermediate  \ce{E2 S0} is not of the type considered in the text, where the technical condition on the rate constants for lifting bistability holds. But the condition holds by \cite[Prop. 5.3(iii)]{amir_multi}.

The reduced network is
\begin{align*}
	&\ce{E1 + S0 ->[\t_1] E1 S0 ->[\t_2] E1 + S1 }&
	&\ce{E2 + S0  ->[\t_3] E2 + S1}\\
	&\ce{E2 + S0 ->[\t_6] E1 S0}&
	&\ce{E1 ->[\t_5] E2}&
	&\ce{S1 ->[\t_4] S0}.
\end{align*} 
The species are renamed as $E_1 = X_1, E_2 = X_2, E_1S_0 = X_3, S_0= X_5, S_1 = X_6$. We have $s=3$.  For this reduced network, $q_x$ has degree 3 and the determinants $H_1,H_2$ are positive for every positive steady state. This means that the stability of steady states depends on the sign of $H_3$. We explore the possibility of applying Theorem~\ref{thm:bistability} to ensure bistability. The conservation laws of the network are
$x_1+x_2+x_3=T_1$ and $x_3+x_5+x_6=T_2$.
Taking $i_1=1$ and $i_2=3$ we define $F_T(x)$ as in Eq.~\eqref{F}. The solutions of $F_{T,\ell}(x)=0$ for $\ell\neq 3$ can be parametrized in terms of $z=x_5$ as
{\small  
\[\varphi(z) = \left(\frac{T_1\t_2\t_6z}{b(z)}, \frac{T_1\t_2\t_5}{b(z)},\frac{T_1\t_6z(\t_1z+\t_5)}{b(z)}, z, \frac{T_1\t_2z(\t_1\t_6z+\t_3\t_5+\t_5\t_6)}{\t_4\, b(z)}\right),\] }%
with $b(z)=\t_1\t_6z^2+(\t_2+\t_5)\t_6z+\t_2\t_5$, 
and this parametrization is positive for every $z\in\mathcal{E}=\R_{>0}$. Additionally $\varphi_4(z)=z$ and $\varphi_4'(z)=1\neq 0$. With this parametrization, the positive steady states in one stoichiometric compatibility class are in one to one correspondence with the positive roots of $F_{T,3}(\varphi(x))=0$, that is a rational function whose numerator is a polynomial of degree 3 and has positive denominator: 
{\scriptsize
\begin{equation*}
\begin{split}
	F_{T,3}(\varphi(x))=\frac{1}{(\t_1\t_6z^2 + \t_2\t_6z + \t_5\t_6z + \t_2\t_5)\t_4} & \left[ \t_1\t_4\t_6z^3  \right. + (T_1\t_1\t_2\t_6 + T_1\t_1\t_4\t_6 - T_2\t_1\t_4\t_6 + \t_2\t_4\t_6 + \t_4\t_5\t_6)z^2 + \\ & \hspace{-1.5cm} (T_1\t_2\t_3\t_5 + T_1\t_2\t_5\t_6 + T_1\t_4\t_5\t_6 - T_2\t_2\t_4\t_6 - T_2\t_4\t_5\t_6 + \t_2\t_4\t_5)z - \left. T_2\t_2\t_4\t_5 \right] .
\end{split}
\end{equation*} }%
Here $i=4,j=3$. The second hypothesis of Theorem~\ref{thm:bistability} holds. For the first hypothesis, we have
\[\det(J_F(\varphi(z)_{J,I})= \t_4\,  b(z),\]
which is positive for every $z\in\mathcal{E}$.  The sign of $(F_{T,3}\circ \varphi)'(z_1)$, for $z_1$ the first positive root of $F_{T,3}(\varphi(z))$, is positive as the independent term of $(F_{T,3}\circ \varphi)(z)$ is negative. 
 Furthermore, the sign of Eq.~\eqref{eq:importantsign}
 is $(-1)^{3+3+4}(1)(1)=1$ positive. By  Theorem~\ref{thm:bistability},  the steady states $\varphi(z_1),\varphi(z_3),\dots$ are exponentially stable and $\varphi(z_2),\dots$ are unstable. 

To verify the existence of three positive steady states, we consider  $\det(J_F(\phi))$ for $\phi$ as follows:
{\small 
\begin{align*}
	\phi(x_2,x_5)& =\left( \frac{ \t_6x_2x_5}{\t_5}, x_2, \frac{\t_6x_2x_5(\t_1x_5 + \t_5)}{\t_2\t_5}, x_5, \frac{x_2x_5(\t_1\t_6x_5 + \t_3\t_5 + \t_5\t_6)}{\t_4\t_5}  \right).
\\
\det(J_F(\phi)) &=\t_1 (\t_2\t_6-\t_3\t_5+\t_4\t_6)\t_6 x_2x_5^2 +2(\t_2+\t_4)\t_1\t_5\t_6x_2x_5+(\t_2\t_3+\t_2\t_6+\t_4\t_6)  \t_5^2 x_2  \\
	&\qquad   + \t_1\t_4\t_5\t_6 x_5^2+(\t_2+\t_5)\t_4\t_5\t_6x_5+\t_2\t_4\t_5^2.
\end{align*}}%
Taking $\t_i=1$ for $i=1,\ldots ,6$ and $x_2 = 2, x_5 = 6$, $\det(J_F(\phi)) =-35$. By \cite{FeliuPlos}, the stoichiometric compatibility class containing $\phi(x)$ has more than one positive steady state. This class is defined by the total amounts $T_1 = 98, T_2= 222$ and by solving $F_T(x)=0$ or $F_{T,3}(\varphi(z))=0$, we verify that there are three positive steady states. 

Hence the reduced network is bistable for every set of parameters for which there are three positive steady states, and the original network admits bistability in some   parameter region.

\end{document}